\title{The complexity gap in the static analysis of cache accesses grows if procedure calls are added}
\author{David Monniaux}
\newcommand{\pushtrans}[3]{{#1} \xhookrightarrow{#2} {#3}}
\newcommand{\pushtransstar}[3]{{#1} \xhookrightarrow{#2}^* {#3}}
\newcommand{\emptyword}{\varepsilon}
\newcommand{\true}{\mathbf{1}}
\newcommand{\false}{\mathbf{0}}
\newcommand{\nWays}{K}
\newcommand{\nRegs}{r}
\newcommand{\noAccess}{\varepsilon}
\newcommand{\JACM}{\cite{Monniaux:2019:CCA:3368192.3366018}}
\newcommand{\JACMt}{\citet{Monniaux:2019:CCA:3368192.3366018}}
\newtheorem{theorem}{Theorem}
\newtheorem{lemma}{Lemma}
\newtheorem{corollary}{Corollary}
\theoremstyle{definition}
\newtheorem{definition}{Definition}
\newtheorem{example}{Example}
\begin{document}
\maketitle

\begin{abstract}
  The static analysis of cache accesses consists in correctly predicting which accesses are hits or misses.
While there exist good exact and approximate analyses for caches implementing the least recently used (LRU) replacement policy, such analyses were harder to find for other replacement policies.
A theoretical explanation was found: for an appropriate setting of analysis over control-flow graphs, cache analysis is PSPACE-complete for all common replacement policies (FIFO, PLRU, NMRU) except for LRU, for which it is only NP-complete.

In this paper, we show that if procedure calls are added to the control flow, then the gap widens: analysis remains NP-complete for LRU, but becomes EXPTIME-complete for the three other policies.
For this, we improve on earlier results on the complexity of reachability problems on Boolean programs with procedure calls.

In addition, for the LRU policy we derive a backtracking algorithm as well as an approach for using it as a last resort after other analyses have failed to conclude.

\end{abstract}

\section{Introduction}
Most processors, except the smallest ones, implement some form of \emph{caching}: fast memory close to the CPU core retains frequently accessed code and data, to avoid slow access to external memory.
A hardware cache is split into \emph{cache sets}, and a given memory block may be stored only in a certain cache set, depending on its address.
A cache set contains $\nWays$ blocks, where $\nWays$ is known as the \emph{associativity} or \emph{number of ways}.
When a new block is loaded into a cache set, a \emph{cache replacement policy} determines which block is to be evicted to make room for it.
The most intuitive cache replacement policy is to evict the \emph{least recently used} (LRU) block. However, due to difficulties in implementing that policy efficiently in hardware, other policies with supposedly ``close'' behavior (PLRU, NMRU) are often used instead; sometimes the simple \emph{first-in first-out} (FIFO) policy is used.

Loading data from main memory, or from a more distant level of cache, may take $6$ to $100$ times the time taken for loading it from a close cache.
Not only does it directly affect execution time, it also results, especially in processors with out-of-order execution, in different microarchitectural execution patterns, themselves having an impact on execution time.
Static analysis tools used to compute bounds on worst-case execution time%
\footnote{Absint's aiT is one such tool, used in industries such as avionics, automotive, energy and space. \url{https://www.absint.com/ait/}
  Non-commercial tools include OTAWA. \url{http://www.otawa.fr/}}
thus include a static analysis for caches, meant to predict which accesses are always cache \emph{hits} (data in cache) and which are always \emph{misses} (data not in cache).%
\footnote{Static analysis tools may perform more refined analyses, such as persistence analysis, refinements according to execution paths or loop indices, etc. We do not cover these here. Our goal is to study difficulty even in the simplest, most easily understood analysis.}

\begin{figure}
\begin{subfigure}{0.45\textwidth}
\begin{center}
    \begin{tikzpicture}[node distance=4em,->,auto]
      \node (start) [diamond,draw] {  };
      \node (q0) [right of=start] { $v_0$ };
      \node (q1) [above right of=q0] { $v_1$ };
      \node (q3) [below right of=q1] { $v_3$ };
      \node (q2) [below right of=q0]  { $v_2$ };
      \path (start) edge node {$a$} (q0);
      \path (q0) edge node { $a$ } (q1);
      \path (q2) edge node { $c$ } (q0);
      \path (q1) edge node { $b$ } (q2);
      \path (q1) edge node { $d$} (q3);
      \path (q3) edge node { $e$} (q2);
    \end{tikzpicture}
\end{center}
\caption{Original control-flow graph for two cache sets:
$\{a,e\}$ and $\{b,c,d\}$.}

\label{fig:CFG}
\end{subfigure}
\hfill
\begin{subfigure}{0.45\textwidth}
\begin{center}
    \begin{tikzpicture}[node distance=4em,->,auto]
      \node (start) [draw,diamond] {  };
      \node (q0) [right of=start] { $v_0$ };
      \node (q1) [above right of=q0] { $v_1$ };
      \node (q3) [below right of=q1] { $v_3$ };
      \node (q2) [below right of=q0]  { $v_2$ };
      \path (start) edge node {$a$} (q0);
      \path (q0) edge node { $a$ } (q1);
      \path (q2) edge node { $\noAccess$ } (q0);
      \path (q1) edge node { $\noAccess$ } (q2);
      \path (q1) edge node { $\noAccess$} (q3);
      \path (q3) edge node { $e$} (q2);
    \end{tikzpicture}
\end{center}
\caption{The same control-flow graph sliced for cache set $\{a,e\}$.}
\end{subfigure}

\caption{Slicing of a control-flow graph according to a cache set, from \citep{Touzeau:2019:FEA:3302515.3290367}. $a,b,c,d,e$ are identifiers of memory blocks. $\noAccess$ means that no cache access takes place along that edge.}
\label{fig:CFG-slicing}
\end{figure}
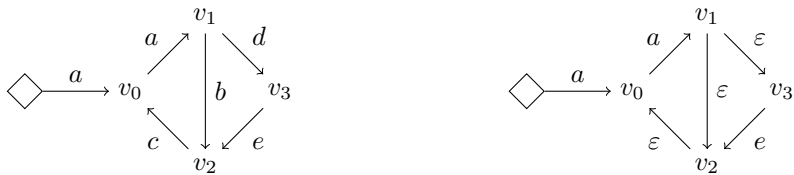

For almost all policies (including LRU, FIFO, PLRU and NMRU) found in processors, analysis for exist-hit and exist-miss properties may be performed on each cache set separately, by slicing the program according to the cache set (\autoref{fig:CFG-slicing}), without loss of precision (the only policy for which this is false is pseudo-round-robin, which we do not consider here)~\citep{Touzeau:2019:FEA:3302515.3290367}.%
\footnote{This could be incorrect if we were considering complex microarchitectures with cache prefetching etc., since the availability of data in a cache set may result in loads being made or not made to other cache sets. Again, we consider a simple setting here. Separate analysis may however be used for safe overapproximations of the behavior of the system.}
We shall thus consider in this paper, without loss of generality, that the cache consists of a single cache set.

Consider the simple setting where a program is defined by a control-flow graph with edges adorned by identifiers of memory blocks ($a$, $b$\dots) meaning that that block is read on that edge (\autoref{fig:CFG-slicing})---we consider the simple case of read-only caches; caches with writes add some complications.
Equivalently, we can see it as a finite automaton that accepts a sequence of block accesses.
Two interesting decision problems are: at one control vertex, is block $x$ always in the cache (\emph{always hit}), whatever the path taken from the entry point of the graph, assuming an initially empty cache? Is it always not in cache (\emph{always miss})?
These two problems are often grouped into one: classify blocks into ``always hit'', ``always miss'', and ``unknown''.
Equivalently, by negating the problems, one may consider \emph{exist hit}: ``does there exist a path so that the block is in the cache?''; \emph{exist miss}: ``does there exist a path so that the block is not in the cache?''.
Similar problems exist for an initially unknown cache, with quantification over all paths and over all initial cache contents.
All these problems are decidable, if only by enumerating all reachable cache states (there is a finite number of cache blocks in the problem).

Approximate static analyses affix the ``unknown'' classification to blocks that may actually be ``always hit'' or ``always miss'', but the analysis is too coarse to notice it.
In contrast, an exact static analysis affixes the ``unknown'' classification only to blocks whose cache status is \emph{definitely unknown}: there exist different paths (or, if applicable, different initial cache contents) such that the block is in the cache for one and out of the cache for another.
For complexity-theoretic results, we only consider exact analyses.%
\footnote{A safe, constant-time, approximate static analysis is to answer ``unknown'' to any request. In order to study complexity, some form of minimal precision must be imposed. It is unclear what metric should be used for this; thus our choice to require exactness.}

By encoding the behavior of the cache into the program itself, seeing memory accesses as actions on the cache contents seen as a vector of bits, one turns the exact analysis ``exist hit'' and ``exist miss'' problems into model-checking reachability problems, solvable in PSPACE. If the control-flow graph is acyclic, similar reasoning leads to membership in NP.
{\JACMt}, considering the LRU, PLRU, NMRU and FIFO policies, showed, in addition, that:
\begin{itemize}
\item for all these policies (for NMRU, only with an initially empty cache), the analysis problems for acyclic control-flow graphs are NP-hard;
\item for all these policies except LRU (for NMRU, only with an initially empty cache), the analysis problems for general control-flow graphs are PSPACE-hard;
\item for LRU, all analysis problems (regardless of initial state or acyclicity) are NP-complete.
\end{itemize}
To summarize, for all policies, the analysis problems are NP-complete for acyclic control-flow graphs, but there is a gap between LRU (NP-complete) and the others (PSPACE-complete) for general control-flow graphs.
{\JACMt} however left to future work the question of adding procedure calls (pushdown control) to the setting.

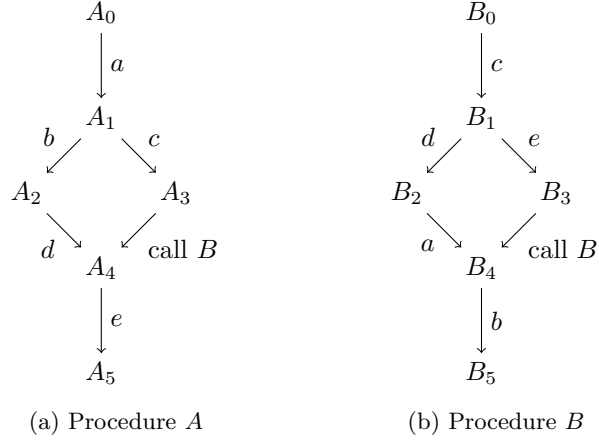
\begin{figure}\hfil
  \begin{subfigure}{0.25\textwidth}
  \begin{tikzpicture}[node distance=4em,->,auto]
    \node (A0) { $A_0$ };
    \node (A1) [below of=A0] { $A_1$ };
    \node (A2) [below left of=A1] { $A_2$ };
    \node (A3) [below right of=A1] { $A_3$ };
    \node (A4) [below right of=A2] { $A_4$ };
    \node (A5) [below of=A4] { $A_5$ };
    \path (A0) edge node { $a$ } (A1);
    \path (A1) edge node [above left] { $b$ } (A2);
    \path (A1) edge node { $c$ } (A3);
    \path (A2) edge node [below left] { $d$ } (A4);
    \path (A3) edge node { call~$B$ } (A4);
    \path (A4) edge node { $e$ } (A5);
  \end{tikzpicture}
  \caption{Procedure $A$}
  \end{subfigure}\hfil
  \begin{subfigure}{0.25\textwidth}
  \begin{tikzpicture}[node distance=4em,->,auto]
    \node (B0) { $B_0$ };
    \node (B1) [below of=A0] { $B_1$ };
    \node (B2) [below left of=A1] { $B_2$ };
    \node (B3) [below right of=A1] { $B_3$ };
    \node (B4) [below right of=A2] { $B_4$ };
    \node (B5) [below of=A4] { $B_5$ };
    \path (B0) edge node { $c$ } (B1);
    \path (B1) edge node [above left] { $d$ } (B2);
    \path (B1) edge node { $e$ } (B3);
    \path (B2) edge node [below left] { $a$ } (B4);
    \path (B3) edge node { call~$B$ } (B4);
    \path (B4) edge node { $b$ } (B5);
  \end{tikzpicture}
  \caption{Procedure $B$}
  \end{subfigure}\hfil

  \caption{Two procedures, including a recursive one. Running procedure $A$ may yield traces $abde$, $accdabe$, $accecdabbe$\dots}
  \label{fig:procedures}
\end{figure}

In this paper, we show that \emph{the gap widens when procedure calls are added}: the decision problems remain NP-complete for LRU for programs with procedure calls (\autoref{fig:procedures}), but become EXPTIME-complete for other policies (for NMRU, EXPTIME hardness is proved only with an initially empty cache).%
\footnote{For complexity theoretical purposes, we assume that the input is the program to be analyzed, as a set of procedures consisting of explicitly represented control-flow graphs labeled with array accesses, preceded by the associativity $\nWays$ of the cache written in unary notation.}

For the LRU policy, we derive backtracking algorithms that solve the exist-miss and exist-hit problems from the arguments of the proof of membership in NP.
These algorithms may be costly, but we give an approach for using them only when some cheaper analyses have failed to conclude on whether an access is always a hit, always a miss, or has ``definitely unknown'' status.

As a secondary contribution, we provide an alternate proof (\autoref{part:exptime_complete}) that reachability in Boolean programs with procedure calls is EXPTIME-complete, a fact already shown by \citet{DBLP:conf/tacas/GodefroidY13}.

\section{Pushdown systems}
At several points, we shall use classical results on reachability in pushdown systems, which are used to model programs without variables (only control locations) but with procedure calls. The set of cache blocks will be the alphabet.

A \emph{pushdown system} is a quadruple $(Q,\Sigma,\Gamma,\Delta)$ where $Q$ is a finite set of \emph{control locations}, $\Sigma$ is a finite \emph{word alphabet}, $\Gamma$ is a finite \emph{stack alphabet}, 
$\Delta \subseteq (Q \times \Gamma) \times \big(\Sigma \cup \{ \emptyword \}\big) \times (Q \times \Gamma^*)$ is a finite set of \emph{transition rules}.
A pair $(q,w) \in Q \times \Gamma$ is a \emph{configuration} of the automaton.
If $\big((q,\gamma),\sigma,(q',w)\big) \in \Delta$, then we write
$\pushtrans{(q,\gamma)}{\sigma}{(q',w)}$.%
\footnote{This is the same definition as \cite{DBLP:conf/cav/EsparzaHRS00} except we keep a word alphabet.}
We write $\pushtransstar{c_0}{\sigma_1 \dots \sigma_n}{c_n}$ for
$\exists c_1,\dots,c_{n-1}~\pushtrans{c_0}{\sigma_1}{c_1} \land \dots \land
\pushtrans{c_{n-1}}{\sigma_n}{c_n}$.

To define what it means for a word to be \emph{accepted} by a pushdown system, we add to the quadruple two more items: $q_0$ an \emph{initial state}, $Q_f \subseteq Q$ a set of \emph{final states}.
A word $w=w_1\dots w_n$ is \emph{accepted with arbitrary final stack} if there is a final state $q_f \in Q_f$, a final stack $\gamma_f$, a sequence of configurations $c_0,\dots,c_n$ and transitions $\pushtrans{c_0}{w_1}{c_1}$, \dots, $\pushtrans{c_{n-1}}{w_n}{c_n}$ such that $c_0=(q_0,\emptyword)$ and $c_f=(q_f,\gamma_f)$.

A configuration $(q,w)$ is a word (over $Q \cup \Gamma$) thus we talk of sets of configurations recognized by finite automata.
\cite{DBLP:conf/concur/BouajjaniEM97}, \citet[Th.~3 (respectively, 1)]{DBLP:conf/cav/EsparzaHRS00} proposed algorithms that compute, in polynomial time, finite automata that recognize the set of reachable (respectively, co-reachable) configurations from a set of configurations defined by a finite automaton.
The following classical result ensues:

\begin{theorem}\label{th:reachability_pushdown_polynomial}
  There is a polynomial-time algorithm that,
  given an explicitly represented pushdown system and
  two regular sets $I$ and $F$ of configurations
  (represented using finite automata),
  checks whether there are configurations $i \in I$ and $f \in F$ and a word
  $w \in \Sigma^*$ such that $\pushtransstar{i}{w}{f}$.
\end{theorem}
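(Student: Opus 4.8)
The plan is to reduce the theorem to the two algorithmic building blocks already cited in the excerpt: the polynomial-time computation of a finite automaton recognizing the set of reachable configurations (the forward, or $\mathit{post}^*$, construction) and, dually, the set of co-reachable configurations (the backward, or $\mathit{pre}^*$, construction), both from an automaton-represented set of configurations. Concretely, I would first invoke the construction of \citet{DBLP:conf/cav/EsparzaHRS00} applied to the initial set $I$: this yields, in polynomial time, a finite automaton $\mathcal{A}_{\mathit{post}^*}$ recognizing exactly the set of configurations $c$ such that $\pushtransstar{i}{w}{c}$ for some $i \in I$ and some $w \in \Sigma^*$. The theorem then reduces to deciding whether $L(\mathcal{A}_{\mathit{post}^*}) \cap F \neq \emptyset$, that is, whether the regular set of forward-reachable configurations meets the regular target set $F$.

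The core of the argument is that this intersection can be tested in polynomial time. Since configurations are words over $Q \cup \Gamma$ and both $\mathcal{A}_{\mathit{post}^*}$ and the automaton for $F$ are ordinary finite automata over this alphabet, I would form the standard product automaton and run a graph reachability (emptiness) check on it: $L(\mathcal{A}_{\mathit{post}^*}) \cap F$ is nonempty iff some accepting state is reachable from some initial state in the product. The product has size polynomial in the sizes of its two factors, and emptiness of a finite automaton is decidable in time linear in the automaton, so this whole phase is polynomial. Composing with the polynomial $\mathit{post}^*$ construction gives an overall polynomial-time algorithm, establishing the theorem. As a remark, one could equally well have started from $F$ with the $\mathit{pre}^*$ construction and intersected with $I$; the two routes are symmetric, which is why the excerpt flags both the reachable and co-reachable constructions.

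I should verify the mild bookkeeping that makes the reduction clean. One point to check is that the notion of ``word $w \in \Sigma^*$ connecting $i$ to $f$'' matches exactly what the saturation construction captures: the $\mathit{post}^*$ automaton records reachability under the unlabeled transition relation $\pushtransstar{\cdot}{\cdot}{\cdot}$, and since we quantify existentially over all words $w$, the labels $\sigma \in \Sigma \cup \{\emptyword\}$ on the rules can be erased for the purpose of this problem --- we only care that \emph{some} sequence of rules exists. A second point is that $I$ and $F$ are given as arbitrary regular sets, not single configurations; but the cited constructions are stated precisely for automaton-represented input sets, so no additional preprocessing is needed beyond feeding $\mathcal{A}_I$ directly into the saturation procedure.

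The main obstacle is not in my reduction, which is routine, but is entirely encapsulated in the cited saturation theorems: the nontrivial content is that $\mathit{post}^*$ and $\mathit{pre}^*$ of a regular configuration set are again regular and computable in polynomial time despite the unbounded stack. I am entitled to assume this, since \autoref{th:reachability_pushdown_polynomial} is explicitly presented as a classical consequence of the algorithms of \cite{DBLP:conf/concur/BouajjaniEM97} and \citet{DBLP:conf/cav/EsparzaHRS00}. My proof therefore only needs to assemble these black boxes with a product-and-emptiness test, taking care that the polynomial bounds compose and that the intersection test is performed on the configuration alphabet $Q \cup \Gamma$ rather than on the word alphabet $\Sigma$.
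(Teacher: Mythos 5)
Your proposal is correct and matches the paper's intent exactly: the paper gives no explicit proof, stating only that the result ``ensues'' from the cited $\mathit{post}^*$/$\mathit{pre}^*$ saturation constructions, and your assembly --- saturate from $I$ (or dually from $F$), then intersect the resulting regular set of configurations with the other regular set via a product automaton and test emptiness --- is precisely the standard argument being invoked.
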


By ``explicitly represented'' we mean that the states and the transitions are explicitly enumerated in the input.

\section{Least recently used policy}
The least recently used policy operates as follows: a $\nWays$-way cache ($\nWays$ is the \emph{number of ways} or \emph{associativity}) is a list of at most $\nWays$ distinct cache blocks. Blocks are taken from the finite set of blocks accessed by the program under analysis.
The \emph{age} of a block is its position in the list: $0$ for the most recently used block, $\nWays-1$ for the least recently used block, and, by convention, $\infty$ for blocks not in the cache.
Depending on the situation being modeled, this list may be taken initially empty (empty initial cache) or may take any initial value (arbitrary initial cache).

When a block $x$ is accessed, if it belongs to the current cache state then it is moved to the foremost position in the list, and other blocks keep their relative order; if it does not, then it is put at the foremost position and the block of age $K-1$ is evicted (if there is such a block).
The block of age $K-1$ is the least recently used in the cache, thus the name of the policy.
For instance, for a 4-way cache containing initially $abcd$, after an access to $b$ the cache contains $bacd$, and if instead $e$ is accessed the cache then contains $eabc$.

\subsection{NP membership of the analysis problem}
\label{sec:LRU_in_NP}
The simple observation at the basis of all methods for static analysis of LRU caches \citep{Ferdinand99,Touzeau:2019:FEA:3302515.3290367,Touzeau_et_al_CAV2017} is that 
on a $\nWays$-way LRU cache, a block $a$ may be in the cache after a finite execution $e$ if and only if one of these conditions is met:
\begin{itemize}
\item if starting from an arbitrary cache state: if fewer than $\nWays$ distinct cache blocks have been accessed along $e$;
\item regardless of the initial cache state, if fewer than $\nWays$ distinct cache blocks have been accessed along $e$ since the last access to $a$.
\end{itemize}


In {\JACM}, the proofs that the exist-hit and exist-miss problems for control-flow graphs adorned with cache accesses under the LRU policy could be solved in NP relied on path compression: if there exists an execution of arbitrary length that reaches a control location $l$ with $a$ in the cache (respectively, not in the cache), then there exists one of polynomial length with the same property, which is the NP witness.
More specifically, the proof relied on the possibility to ``compress'' an execution path between two control locations into a path of polynomial length with the same set of control edges along the two paths, using classical arguments such as ``if the length of an execution exceeds the number of states, then it encounters the same state twice''.

Adding procedure calls to a finite automaton classically turns it into a pushdown system.
Unfortunately, we cannot expect the same kind of compression results with pushdown systems.
Indeed, if a finite automaton accepts a word then it accepts a word of length bounded by its number of states, but the same does not apply to pushdown systems:
consider a program composed of $n$ procedures $f_0,\dots,f_n$, with $f_i$ just making two successive calls to $f_{i+1}$ for $i < n$, and $f_n$ executing an instruction $a$, then the shortest (and only) execution of $f_0$ executes $2^n$ instructions~$a$, with an exponential number of different configurations (recall that a configuration consists in a control location and a call stack).

Instead, we will use a weaker property. The witness for the existence of an execution path of arbitrary length is the sequence of first occurrences of block accesses along that path.
The crux of the argument is that it is possible, given such a sequence, to check in polynomial time for the existence of an execution path matching that sequence.

\begin{definition}
  Let $w$ be a word over an alphabet $\Sigma$, let $F(w)$ denote the sequence of first occurrences of letters in $w$.
  For instance, $F(dadaaabbaaabcbbaa) = dabc$.
\end{definition}

Let us first recall the following classical result, obtained through an automaton product construct:
\begin{theorem}\label{th:pushdown_regular}
  Let $A$ be a pushdown system over alphabet $\Sigma$ and let $A'$ be a finite automaton (deterministic or not) over~$\Sigma$.
  Then the intersection of the languages recognized by $A$ and $A'$ is recognized by a pushdown system with control locations in the Cartesian product of the control locations of $A$ and $A'$.
\end{theorem}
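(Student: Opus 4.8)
The plan is to carry out the standard synchronized product construction, taking the stack machinery verbatim from $A$ and threading the state of $A'$ through the control locations. Write $A = (Q,\Sigma,\Gamma,\Delta)$ with initial state $q_0$ and final states $Q_f$, and let $A' = (Q',\Sigma,\delta',q_0',Q_f')$; after the usual $\emptyword$-closure I may assume that $A'$ has no $\emptyword$-transitions, so that every letter-reading step corresponds to a single well-defined move of $A'$. I would then define the product $B$ to have control locations $Q \times Q'$, the same stack alphabet $\Gamma$, initial state $(q_0,q_0')$, and final states $Q_f \times Q_f'$. A configuration of $B$ is thus a pair $((q,q'),\omega)$ with $\omega \in \Gamma^*$ a stack word.

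The transition rules of $B$ are obtained by pairing each rule of $A$ with a compatible move of $A'$. For every rule $\pushtrans{(q,\gamma)}{\sigma}{(q_1,\omega)}$ of $A$ with $\sigma \in \Sigma$, and every transition $q' \xrightarrow{\sigma} q_1'$ of $A'$, I add $\pushtrans{((q,q'),\gamma)}{\sigma}{((q_1,q_1'),\omega)}$. The delicate point is the treatment of the silent moves of the pushdown: for every rule $\pushtrans{(q,\gamma)}{\emptyword}{(q_1,\omega)}$ of $A$ and every state $q'$ of $A'$, I add $\pushtrans{((q,q'),\gamma)}{\emptyword}{((q_1,q'),\omega)}$, so that $A$ may reshape its stack while $A'$ is held fixed and consumes no input. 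Since $A'$ may be nondeterministic, all compatible moves are included, which is what makes the construction preserve the possibly many accepting runs of $A'$. Each rule of $B$ pops a single top symbol and pushes a word over $\Gamma$, so $B$ is a genuine pushdown system in the sense of the paper.

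It remains to establish that the language of $B$ is exactly the intersection. The cleanest route is a correspondence lemma proved by induction on the number of transitions: a run $\pushtransstar{((q_0,q_0'),\emptyword)}{u}{((q,q'),\omega)}$ of $B$ exists if and only if there is a run $\pushtransstar{(q_0,\emptyword)}{u}{(q,\omega)}$ of $A$ reading the same word $u$ together with a run of $A'$ from $q_0'$ to $q'$ on $u$. The forward direction follows by projecting a run of $B$ onto its two coordinates. The backward direction follows by interleaving a run of $A$ with the run of $A'$, inserting the $A'$ transitions exactly at the steps where $A$ reads a genuine letter and leaving $A'$ idle at the silent steps. Specialising $(q,q')$ to $Q_f \times Q_f'$ and invoking acceptance with arbitrary final stack then yields $u \in L(A) \cap L(A')$ iff $u \in L(B)$, and the control locations of $B$ lie in $Q \times Q'$ by construction, as required.

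I expect the main obstacle to be bookkeeping rather than conceptual: the induction must be made precise enough to show that the $\emptyword$-moves of the pushdown do not let the $A'$-component drift, and that the interleaving in the backward direction remains well-defined even when several silent moves of $A$ occur between two letter-reading moves. A secondary point worth stating carefully is the reduction to an $\emptyword$-free $A'$, which guarantees that the synchronization on input letters is unambiguous.
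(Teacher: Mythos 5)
Your proof is correct and is precisely the ``automaton product construct'' that the paper invokes for this classical result without spelling it out: Cartesian-product control locations, synchronization on input letters, $\emptyword$-moves of the pushdown leaving the $A'$-component fixed, and a correspondence lemma by induction on run length. Nothing further is needed.
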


We shall use variants of the following lemma:
\begin{lemma}\label{th:pushdown_sequence_nonempty}
  Let $A$ be a pushdown system over alphabet $\Sigma$.
  Let $s$ be a sequence of pairwise distinct letters in $\Sigma$.
  Then it is possible to check in polynomial time for the existence of a word $w$ accepted (with arbitrary final stack) by $A$ such that $F(w)=s$.
\end{lemma}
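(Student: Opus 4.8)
The plan is to reduce the problem to a pushdown reachability check by intersecting $A$ with a small finite automaton that recognizes exactly those words whose first-occurrence sequence is $s$. First I would observe that the condition $F(w)=s$ is itself a \emph{regular} property. Writing $s=s_1\dots s_k$, a word $w$ satisfies $F(w)=s$ if and only if it factors as $w = s_1 u_1 s_2 u_2 \cdots s_k u_k$ with each $u_i \in \{s_1,\dots,s_i\}^*$: between the first occurrence of $s_i$ and that of $s_{i+1}$ only already-seen letters may appear, no letter outside $\{s_1,\dots,s_k\}$ may ever appear, and all $k$ letters must eventually appear, in the prescribed order.

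This language is recognized by a deterministic automaton $A'$ with states $0,1,\dots,k$ together with a sink reject state. State $i$ records that the first occurrences of $s_1,\dots,s_i$ have already been read. From state $i$, reading $s_j$ with $j\le i$ keeps the automaton in state $i$; reading $s_{i+1}$ moves it to state $i+1$; reading anything else sends it to the sink. The single accepting state is $k$, which forces all letters of $s$ to have occurred. This automaton has $O(k)$ states, hence size polynomial in the input, and recognizes exactly $\{w \in \Sigma^* : F(w)=s\}$.

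Next I would apply \autoref{th:pushdown_regular} to obtain a pushdown system $A''$, of size polynomial in $A$ and $A'$, recognizing the intersection of the language of $A$ with that of $A'$. A word $w$ accepted by $A''$ is precisely a word accepted (with arbitrary final stack) by $A$ with $F(w)=s$, so the lemma reduces to deciding whether $A''$ accepts some word. Acceptance with arbitrary final stack is a reachability question between regular sets of configurations: take $I=\{(q_0,\emptyword)\}$ and let $F$ be the set of configurations whose control location is accepting in $A''$, i.e.\ whose $A$-component is final and whose $A'$-component is state $k$; both $I$ and $F$ are regular. \autoref{th:reachability_pushdown_polynomial} then decides in polynomial time whether some $i\in I$ reaches some $f\in F$, which settles the lemma.

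The only delicate point is the correctness of the regular characterization of $F(w)=s$, and in particular the two constraints it hides: that no letter outside $\{s_1,\dots,s_k\}$ may occur, and that the first occurrences must appear in exactly this order. Both are enforced by the sink transitions and by the linear chain of states of $A'$, with the accepting state $k$ guaranteeing that every letter of $s$ is eventually seen. Everything else is an immediate appeal to the two cited theorems, whose polynomial-time guarantees compose to yield the claimed polynomial bound.
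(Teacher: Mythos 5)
Your proposal is correct and follows essentially the same route as the paper: the paper also characterizes $\{w : F(w)=s\}$ by the regular expression $s_1 s_1^* s_2 (s_1|s_2)^* \cdots s_n (s_1|\cdots|s_n)^*$, builds the corresponding chain automaton, takes the product with the pushdown system via \autoref{th:pushdown_regular}, and decides emptiness via \autoref{th:reachability_pushdown_polynomial}. Your explicit sink state and the spelled-out reduction of arbitrary-final-stack acceptance to reachability between regular configuration sets are just minor elaborations of the same argument.
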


\begin{proof}
  This problem is equivalent to testing for the nonemptiness of the intersection of the language recognized by $A$ and the language recognized by the regular expression $Z(s) = s_1 s_1^* s_2 (s_1 | s_2)^* \dots s_n (s_1 | s_2 | \dots | s_n)^*$.
  That latter language is recognized by the automaton\vspace{-2em}
  
  \begin{equation}\label{equ:automaton}
  \begin{tikzpicture}[node distance=8em,->,auto]
    \node(q0) [state,diamond] {$q_0$};
    \node(q1) [state,right of=q0] { $q_1$ };
    \node(q2) [state,right of=q1] { $q_2$ };
    \node(qnm1) [state,right of=q2] { $q_{n-1}$ };
    \node(qn) [state,accepting,right of=qnm1] { $q_n$};
    \path(q0) edge node {$s_1$} (q1);
    \path(q1) edge node {$s_2$} (q2);
    \path(q2) edge [dotted] (qnm1);
    \path(qnm1) edge node {$s_n$} (qn);
    \path(q1) edge[loop] node[above] {$s_1$} (q1);
    \path(q2) edge[loop] node[above] {$s_1|s_2$} (q2);
    \path(qnm1) edge[loop] node [above]{$s_1|s_2|\dots|s_{n-1}$} (qnm1);
    \path(qn) edge[loop] node[above] {$s_1|s_2|\dots|s_n$} (qn);
  \end{tikzpicture}
\end{equation}

By \autoref{th:pushdown_regular} the intersection is recognized by a pushdown system with $(n+1)|A|$ control locations where $|A|$ is the number of control locations in~$A$.
Its emptiness can be checked in polynomial time (\autoref{th:reachability_pushdown_polynomial}).
\end{proof}

\begin{theorem}
  The exist-hit and exist-miss problems can be solved in NP for both an empty initial cache and an arbitrary initial cache.
\end{theorem}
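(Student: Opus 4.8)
The plan is to reduce both problems to a polynomial-size nondeterministic guess followed by a polynomial-time pushdown reachability check, exactly as the preceding discussion suggests: since execution paths cannot be compressed in a pushdown system, the NP witness will be a short sequence of distinct blocks, and verification will go through a product of the program's pushdown system with a small finite automaton encoding one LRU condition, whose nonemptiness toward the queried control location is then tested via \autoref{th:reachability_pushdown_polynomial}. Throughout, let $a$ be the queried block and $\ell$ the queried control location. I first isolate the only fact I need from the LRU semantics, namely the characterization recalled at the start of \autoref{sec:LRU_in_NP}: whether $a$ is in the cache after an execution depends only on how many \emph{distinct} blocks were accessed, either in total (arbitrary initial cache) or since the last access to $a$ (any initial cache); in particular, once $a$ has been accessed, its status no longer depends on the initial cache.

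For \textbf{exist-hit} I distinguish two ways of making $a$ a hit. The first, available only for an arbitrary initial cache, is to reach $\ell$ while accessing fewer than $\nWays$ distinct blocks in total, setting the initial cache to contain $a$; then the bound $\nWays-1 < \nWays$ on distinct accesses keeps $a$ present. To verify this I guess a set $S$ of at most $\nWays-1$ blocks (as the number of blocks is polynomial and $\nWays$ is given in unary, the guess has polynomial size) and, using a variant of \autoref{th:pushdown_sequence_nonempty}, intersect the pushdown system with the automaton recognizing $S^*$ and test reachability of a configuration at $\ell$. The second way, available for any initial cache, is that the last access to $a$ is followed by fewer than $\nWays$ distinct blocks. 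Here I guess the set $S$ of those blocks, with $|S| \le \nWays-1$ and $a \notin S$, and intersect with the automaton for $\Sigma^* \, a \, S^*$: the leading $\Sigma^*$ is an arbitrary prefix, the explicit $a$ marks the intended last access to $a$, and forcing the suffix into $S^*$ (which excludes $a$) guarantees both that no later $a$ occurs and that at most $\nWays-1$ distinct blocks follow. Reachability of $\ell$ in the product is again polynomial.

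For \textbf{exist-miss} the structure is symmetric. One sub-case is that $a$ is never accessed, verified by intersecting with $(\Sigma \setminus \{a\})^*$ and testing reachability of $\ell$; for an empty initial cache this already yields a miss, and for an arbitrary initial cache one additionally fixes an initial content without $a$. The other sub-case is that the last access to $a$ is followed by at least $\nWays$ distinct blocks. I guess a sequence $s = s_1 \dots s_{\nWays}$ of $\nWays$ pairwise distinct blocks different from $a$ and intersect with an automaton for $\Sigma^* \, a \, Z'(s)$, where $Z'(s)$ is the automaton of \eqref{equ:automaton} modified so that its accepting state carries a self-loop on all of $\Sigma \setminus \{a\}$ rather than only on $s_1,\dots,s_{\nWays}$; this accepts exactly the suffixes over $\Sigma \setminus \{a\}$ whose first $\nWays$ distinct letters are $s_1,\dots,s_{\nWays}$, hence containing at least $\nWays$ distinct blocks and no further $a$. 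Testing reachability of $\ell$ in the product is polynomial. Because, once $a$ has been accessed, its status is independent of the initial content, this case gives the same answer for an empty and for an arbitrary initial cache.

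I expect the only delicate points to be the faithful encoding of the LRU conditions as these small regular constraints---in particular pinning down the \emph{last} access to $a$ by forbidding $a$ in the marked suffix, and cleanly separating the ``$a$ supplied by the initial cache'' contribution in the arbitrary-initial-cache hit case---together with the bookkeeping that each guess has polynomial size and that each verification is one application of the product construction of \autoref{th:pushdown_regular} followed by \autoref{th:reachability_pushdown_polynomial}. The conceptual work is already done by the observation that path compression fails for pushdown systems whereas the first-occurrence sequence remains a polynomial-size certificate; what remains is to assemble these certificates and verifiers case by case, as above.
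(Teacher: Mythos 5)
Your proof follows essentially the same route as the paper: guess a polynomial-size certificate of distinct blocks, encode the corresponding LRU condition as a small regular constraint ($?^{*}aZ(s)$-style patterns), take the product with the pushdown system via \autoref{th:pushdown_regular}, and decide reachability with \autoref{th:reachability_pushdown_polynomial}. Your minor deviations---using an unordered set $S$ and the one-state automaton for $S^{*}$ in the hit cases instead of the ordered first-occurrence automaton, and explicitly excluding $a$ from the accepting-state self-loop in the miss case (where the paper loosely says ``any letter'')---are equivalent and, if anything, slightly tidier.
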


\begin{proof}
  Let $A$ be the pushdown system defining the program to be analyzed.
  
  A block $a$ is in the cache at a control location $q$ during an execution starting from the empty cache if and only if during this execution there is a transition over letter $a$ followed by transitions over $n < \nWays$ pairwise distinct letters;
  in other words, if there exists $s$, $|s| < \nWays$ such that $q$ is reached by an execution of $A$ for a word matching the regular expression $?^* a Z(s)$ where $Z(s)$ is as in the proof of \autoref{th:pushdown_sequence_nonempty}, and $?$ matches any block.
  This is solvable in NP by guessing $s$ and then similarly as in the proof of~\autoref{th:pushdown_sequence_nonempty}.

  A block $a$ is outside the cache at a control location $q$ during an execution starting from the empty cache if and only if either this execution reaches $q$ after using only labels different from $a$, or it reaches $q$ after at least one occurrence of $a$ followed by at least $n \geq \nWays$ pairwise distinct letters distinct from~$a$.
  The first case is obtained by checking the reachability of $q$ in the restriction of $A$ to letters different from~$a$, in polynomial time (\autoref{th:reachability_pushdown_polynomial}).
  The second case is obtained by guessing a sequence $s_1,\dots,s_n$ of pairwise different accesses distinct from $a$ and again using a variant of the proof of~\autoref{th:pushdown_sequence_nonempty}, where the final state of the finite automaton loops onto itself with any letter (allowing for more letters than $s_1,\dots,s_n$).

  There exists an initial cache state and an execution such that $a$ is in the cache at control location $q$ if and only if there exists a sequence $s=s_1,\dots,s_n$ of distinct letters with $n < \nWays$ such that one reaches $q$ after reading a word $w$ such that $F(w)=s$, or there exists a sequence $s=s_1,\dots,s_n$, $n < \nWays$, and words $w_1,w_2$ such that $w_1 w_2$ leads to $q$ and $F(w_2) = a,s_1,\dots,s_n$. Both cases can be checked in NP, as in previous paragraphs.
  
  There exists an initial cache state and an execution such that $a$ is not in the cache at control location $q$ if and only if there exists a sequence reaching $q$ by going only through letters distinct from $a$, which can be checked in polynomial time, or there exists a sequence of pairwise distinct letters also distinct from $a$ $s=s_1,\dots,s_n$, $n \geq \nWays$, and words $w_1,w_2$ such that $w_1 w_2$ leads to $q$ and $F(w_2) = a,s_1,\dots,s_n$, which can be checked in NP, as in previous paragraphs.
\end{proof}

\begin{corollary}
  The exist-hit and exist-miss problems are NP-complete for both an empty initial cache and any initial cache.
\end{corollary}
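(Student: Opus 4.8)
The plan is to combine the two directions of NP-completeness. Membership in NP has just been established by the preceding theorem, which covers both the empty and arbitrary initial cache variants of exist-hit and exist-miss; so the only remaining work is NP-hardness, and the entire difficulty is to produce hardness instances that live inside the richer model of programs with procedure calls.

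For this, I would invoke the fact, recalled in the introduction, that {\JACMt} already proved NP-hardness of the exist-hit and exist-miss problems for acyclic control-flow graphs under the LRU policy (for both initial-cache settings). The key observation is that an ordinary control-flow graph adorned with cache accesses is a degenerate pushdown system: one takes a single-element stack alphabet and rules that never push or pop, so that the stack plays no role and the reachable configurations are in bijection with the vertices reached along accepting paths. Under this embedding, a word is accepted (with arbitrary final stack) by the pushdown system exactly when it labels a path in the original graph, and the cache contents after reading that word are identical in the two models. Consequently the exist-hit and exist-miss instances coincide, and the prior NP-hardness reductions apply verbatim to the pushdown setting.

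The main thing to check is therefore only that this syntactic inclusion is faithful, i.e. that restricting the pushdown model to stack-inert rules recovers exactly the finite-automaton semantics used in {\JACM}; this is immediate from the definitions of acceptance and of configuration reachability given earlier. I do not expect a genuine obstacle here: the hardness is inherited rather than reconstructed, and the only care needed is to state the embedding so that it preserves both the set of feasible access sequences and the LRU cache evolution, after which NP-hardness for all four problem variants follows. Together with the preceding membership theorem, this yields NP-completeness and proves the corollary.
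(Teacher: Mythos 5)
Your proposal is correct and matches the paper's own argument: membership comes from the preceding theorem, and hardness is inherited because programs with procedure calls subsume ordinary control-flow graphs, for which {\JACMt} already proved NP-hardness. You merely spell out the (immediate) embedding of finite automata as stack-inert pushdown systems, which the paper leaves implicit.
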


\begin{proof}
  These problems subsume the corresponding problems for finite (non pushdown) automata, which were proved to be NP-hard by~{\JACM}. 
\end{proof}

\subsection{Backtracking algorithm}
\label{sec:backtracking}
We shall now see how to exploit the NP structure of the LRU analysis decision problem to build backtracking algorithms. The question we address is: given a program possibly consisting of multiple procedures, and a control edge $q_1 \xrightarrow{a} q_2$ labelled with an access to block $a$ within that program, answer whether this access is unreachable, always-hit, always-miss, or has ``definitely unknown'' status, for the LRU policy.
This problem has two variants depending on whether the initial cache is assumed to be empty or arbitrary.

We have seen that membership in NP is established by ``guessing'' a sequence of newly seen blocks in their order of appearance, whence a reachability problem for a pushdown system is created and checked in polynomial time.
The backtracking algorithm will explore possible sequences, and chronologically backtrack when it notices it has entered a search branch that cannot lead to reachability.

Let us begin with the ``exist-hit'' from the initial empty cache subproblem: for an access $q_1 \xrightarrow{a} q_2$ in the pushdown program, find if there is an execution such that this access is in the cache starting from an empty cache; in other words, if there is an execution from the program start to a first access to $a$, then accesses to fewer than $\nWays$ distinct blocks, then $q_1 \xrightarrow{a} q_2$.
This is equivalent to computing the set of all configurations occurring just after accessing $a$ and reachable from the initial state, then checking if there is an execution with accesses to fewer than $\nWays$ distinct blocks from one such configuration to control location~$q_1$.
The reachable configurations, a regular set, can be computed in polynomial time \cite{DBLP:conf/concur/BouajjaniEM97};
the question is to find the remainder of the execution, accessing a sequence  $s=s_1 \dots s_n$ of fewer than $\nWays$ distinct blocks.
For this, we conduct a backtracking search over $s$.
Similarly, the problem with arbitrary initial cache reduces to that backtracking search.

Let us now cast the results of Section~\ref{sec:LRU_in_NP} in a more effective light in order to derive a backtracking algorithm.
Consider Theorem~\ref{th:pushdown_sequence_nonempty} and its proof, which, given a sequence of letters $s$, constructs the product $P(s)$ of the finite automaton (\ref{equ:automaton}) with the pushdown system.
Remark that this product can be constructed iteratively as $s$ is extended: when a new letter is added to $s$, new states are added to the product.
Then, we compute the regular set of configurations reachable by this automaton, using the algorithm from \citet{DBLP:conf/concur/BouajjaniEM97}.
Again, if a new letter $l$ is added to $s$, and we need to compute the reachable states for the product automaton $P(sl)$ then we can retain the reachable configurations pertaining to the states already in $P(s)$ and just compute the ones for the new states.
If no reachable configurations are found in these new states, the search stops exploring $sl$ and backtracks.
If a ``winning'' state (a state with control location $q_1$ in the pushdown system) is found with those states, the search terminates.
All letters $l$ are tried, except for~$a$; the search goes recursively with prefix~$sl$.
Sequences of length greater than or equal to $\nWays$ are not explored.

The backtracking algorithm may be modified to look for sequences that lead to the eviction of~$a$, to solve the ``exist-miss'' problem.
Let $B$ be the original pushdown system from which edges labeled $a$ are removed; we pre-compute the configurations co-reachable from $q_1$ in that pushdown system.
We then look for a block sequence $s$ such that $|s| = \nWays$. A winning state $(p,f)$ in the product pushdown system, where $p$ is a control state in the original pushdown system and $f$ is a state of the finite automaton, is one where the set of its reachable configurations intersects with the set of configurations at $p$ co-reachable from $q_1$ in~$B$.
  In other words, there is at least one configuration reachable by a sequence of distinct letters of length at least $\nWays$, and co-reachable using transitions labeled by letters other than $a$ from $q_1$, which establishes the existence of a sequence of at least $\nWays$ distinct letters, distinct from $a$, reaching~$q_1$.
  Sequences of length greater than $\nWays$ need not be explored.

A further improvement to both algorithms is to systematically intersect the sets of reachable configurations with the co-reachable sets of configurations from $B$, since we are interested only in configurations that can ultimately lead to~$a_1$.

\subsection{Combination with other algorithms}
\citet{Touzeau:2019:FEA:3302515.3290367} proposed using their exact analysis as a last resort when other, cheaper analyses, could not resolve the analysis problem, focusing it on the unresolved accesses. In this section, we take this approach one step further.

They proposed using as a first step an analysis tracking the possible ages of the blocks \citep{Touzeau_et_al_CAV2017}, improving upon the well-known age interval analysis proposed by~\citet{Ferdinand99}.
That age interval analysis computes, at every location and for every block $a$, an upper bound $h_a$ and a lower bound $l_a$ on the age of that block in the cache ($+\infty$ denoting a block outside the cache), whatever the execution.
This analysis can prove that some blocks are always in the cache, or outside of the cache, at a given location.
\citet{Touzeau_et_al_CAV2017} improve upon that analysis by considering, for every block $a$ and every reachable location, the set $S$ of ages for $a$ reachable by all executions at that location, and computing four bounds $l_a$, $l'_a$, $h'_a$, $h_a$ such that $l_a \leq \min S \leq l'_a$ and $h'_a \leq \max S \leq h_a$;
in other words, there is at least one execution that reaches that location with the age for $a$ at most $l'_a$ and one execution that reaches it with that age at least $h'_a$.
This may establish cheaply that the status of some blocks is ``definitely unknown'' at some locations, meaning that there exist executions for which they are in the cache and some in which they are not.
The more expensive exact analysis is called only when the age-based analyses cannot conclude that a block must be in the cache at that location, must be outside of the cache, or has cache status ``definitely unknown''.

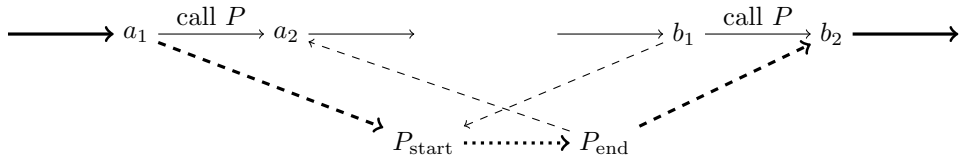
\begin{figure}
  \begin{center}
  \begin{tikzpicture}[node distance=1em and 4em, ->,auto]
    \node(a0) {};
    \node(a1) [right= of a0] {$a_1$};
    \node(a2) [right= of a1] {$a_2$};
    \node(a3) [right= of a2] {};
    \node(b0) [right= 4em of a3] {};
    \node(b1) [right= of b0] {$b_1$};
    \node(b2) [right= of b1] {$b_2$};
    \node(b3) [right= of b2] {};

    \node(Pstart) [below= 3em of a3] {$P_{\mathrm{start}}$};
    \node(Pend) [right= 4em of Pstart] {$P_{\mathrm{end}}$};

    \path(a0) edge[very thick] (a1);
    \path(a1) edge node[above]{call $P$} (a2);
    \path(a2) edge (a3);

    \path(b0) edge (b1);
    \path(b1) edge node[above]{call $P$} (b2);
    \path(b2) edge[very thick] (b3);

    \path(Pstart) edge[dotted,very thick] (Pend);

    \path(a1) edge[dashed,very thick] (Pstart);
    \path(Pend) edge[dashed] (a2);

    \path(b1) edge[dashed] (Pstart);
    \path(Pend) edge[dashed,very thick] (b2);
  \end{tikzpicture}
  \end{center}
  \caption{``Flattened'' control-flow includes paths that cannot be executed in the real system. The two ``call $P$'' edges are replaced by call and return edges (dashed). The path (thick lines)
    $a_1 \rightarrow P_{\mathrm{start}} \dots P_{\mathrm{end}} \rightarrow b_2$
    cannot be executed on the real system, because the call and return sites do not match, but exists in the ``flattened'' system.
    The ``flattened'' system thus strictly over-approximates the original behaviors.}
  \label{fig:flattened}
\end{figure}

It is possible to run these analyses on programs with procedures calls by ``flattening'' the structure, abstracting these calls by replacing each edge from $l$ to $l'$ labeled with a call by one \emph{call edge} going from $l$ to the start of the called procedure, and a \emph{return edge} going from the end of that procedure back to~$l'$.
If a procedure is called only from one location, this amounts to inlining that procedure at the point of call, and does not introduce any extra abstraction.
However, if a procedure is called from multiple locations, this introduces spurious execution traces. Suppose $P$ is called from two places $a_1 \xrightarrow{\text{call~}P} a_2$ and $b_1 \xrightarrow{\text{call~}P} b_2$.
The flattened control structure allows executions $a_1 \xrightarrow{\text{call~}P} b_2$ and $b_1 \xrightarrow{\text{call~}P} a_2$, which are impossible in the original program (Fig.~\ref{fig:flattened}).
Upper and lower bounds on ages $l_a$ and $h_a$ computed on the flattened structure are sound, meaning that are valid for the original pushdown structure, but the $l'_a$ and $h'_a$ bounds, as well as the exact analysis, are not necessarily valid because they may reflect executions that exist in the flattened structure but not in the pushdown structure.
For instance, if, in the above example, a block $a$ is always in the cache when reaching $a_1$, but never in the cache when reaching $b_1$, and $P$ is such that it it neither evicts $a$ nor accesses it in both calls, then $a$ is always in the cache when reaching $a_2$ and never in the cache when reaching $b_2$.
However, in the flattened control structure, the spurious execution $a_1 \xrightarrow{\text{call~}P} b_2$ leads to $a$ in the cache at position $b_2$, and the spurious execution $b_1 \xrightarrow{\text{call~}P} a_2$ leads to $a$ out of the cache at position~$a_2$.

If a small procedure is used only a few times, it makes sense to treat it as inlined at the point of call for the purpose of analysis---either inlining it for real into the control-flow graph of the caller procedure, or descending into the call during analysis in a way that simulates inlining.
We however need to really deal with the pushdown structure when there are (possibly recursive) procedures.
We propose running first the analyses of \citet{Touzeau:2019:FEA:3302515.3290367} on the flattened structure, disregarding the results of return edges when they would be used to establish the existence of an execution.
For the interval analysis, this means taking $l'_a=+\infty$ and $h'_a=0$ when going through a return edge;
for the exact analysis on the flattened structure, this means that diagnoses ``$a$ is always in the cache at this location'' and ``$a$ is never in the cache at this location'' can be retained, but not diagnoses ``$a$ is definitely unknown as this location''.

We thus propose using the backtracking algorithm of Section~\ref{sec:backtracking} only on the cases that still have not been classified by the above approach.

\section{Presentation of other policies}
\emph{In this section, for completeness, we recall facts from {\JACMt}. The details of the policies are not important in our proof, since we will reuse many of their reduction arguments.}

The FIFO policy, also known as ``round-robin'', stores block according to their age, but a crucial difference is that the age is not that of the most recent access to the block, but of its entrance to the cache: in contrast to LRU, a block is not rejuvenated if it is accessed when it is already in the cache.
This allows implementing it simply using an index into a circular buffer.
For instance, if a $\nWays$-way cache contains $abcd$ in ascending age and $b$ is accessed, the cache remains $abcd$; if $e$ is accessed, it then contains $eabc$.

FIFO however has worse practical performance than LRU \citep{Al-Zoubi:2004:PEC:986537.986601}. ``Pseudo LRU'' replacement policies, meant to do in practice what LRU does (evict blocks that were not used recently), were thus proposed, in particular \citep{Al-Zoubi:2004:PEC:986537.986601,Reineke_PhD}:
\begin{itemize}
\item one based on a tree of direction bits leading to cache lines, named ``Tree PLRU'', ``PLRU-t'', or simply, as in this paper, ``PLRU'';
\item one based of ``most recently used'' bits, named ``PLRU-m'', ``MRU'' or, as in this paper, ``NMRU'' \citep{MRU_patent}.
\end{itemize}

Pseudo-LRU policies yield comparable practical performance to LRU \citep{Al-Zoubi:2004:PEC:986537.986601}, but they are very different from the point of view of static analysis.
For instance, a PLRU cache may, with a specifically concocted cache access pattern, indefinitely retain some data that was used only once at the beginning of execution and is never accessed again \citep{DBLP:journals/pieee/HeckmannLTW03}, which of course cannot happen with LRU.
This results in domino effects: the cache behavior of a loop body may be indefinitely affected by the cache contents before the loop~\citep{berg:OASIcs:2006:672}.
All of this makes the static analysis of programs over PLRU caches difficult; there are no known precise and fast analyses for this policy and for NMRU.

\paragraph{PLRU}
The cache lines of a PLRU cache, which may contain cached blocks, are arranged as the leaves of a full binary tree --- thus the number of ways $\nWays$ is a power of $2$, often $4$ or~$8$.
Two lines may not contain the same block.
Each internal node of the tree has a tag bit, which is represented as an arrow pointing to the left or right branch.
The state of the cache is thus the content of the lines and the $\nWays-1$ tag bits.

There is always a unique line such that there is a sequence of arrows from the root of the tree to the line; this is the line \emph{pointed at by the tags}.
Tags are said to be \emph{adjusted away} from a line as follows: on the path from the root of the tree to the line, tag bits are adjusted so that the arrows all point away from that path.

When a block $a$ is accessed:
\begin{itemize}
\item If the block is already in the cache, tags are adjusted away from this line.
\item If the block is not already in the cache and one or more cache lines are empty, the leftmost empty line is filled with $a$, and tags are adjusted away from this block.
\item If the block is not already in the cache and no cache line is empty, the block pointed at by the tags is evicted and replaced with $a$, and tags are adjusted away from this block.
\end{itemize}

\paragraph{NMRU}
The state of an $\nWays$-way NMRU cache is a sequence of at most $\nWays$ memory blocks $\alpha_i$, each tagged by a $0/1$ ``MRU-bit'' $r_i$ saying whether the associated block is to be considered not recently used ($0$) or recently used ($1$),
denoted by $\alpha_1^{r_1} \dots \alpha_{\nWays}^{r_\nWays}$.

An access to a block in the cache, a \emph{hit}, results in the associated MRU-bit being set to~$1$. If there were already $\nWays-1$ MRU-bits equal to $1$, then all the other MRU-bits are set to~$0$.

An access to a block $a$ not in the cache, a \emph{miss}, results in:
\begin{itemize}
\item if the cache is not full (number of blocks less than $\nWays$), then $a^1$ is appended to the sequence
\item if the cache is full (number of blocks equal to $\nWays$), then the leftmost (least index~$i$) block with associated MRU-bit $0$ is replaced by~$a^1$.
If there were already $\nWays-1$ MRU-bits equal to $1$, then all the other MRU-bits are set to~$0$.
\end{itemize}

\section{EXPTIME-completeness for Boolean register machines with procedure calls}
\citet{Touzeau:2019:FEA:3302515.3290367} showed PSPACE-hardness for PLRU, NMRU and FIFO analysis problems by providing, for each policy, a way to simulate a ``Boolean register machine''.
We extend this definition with procedure calls and prove EXPTIME-completeness.

\subsection{Extension to procedure calls}\label{part:procedure_calls}

\begin{definition}
A \emph{Boolean register machine} \citep{Monniaux:2019:CCA:3368192.3366018} is defined by a number $\nRegs$ of registers and a directed (multi)graph with an initial node and a final node, with edges adorned by instructions of the form:
\begin{description}
\item[Guard] $v_i=b$ where $1 \leq i \leq \nRegs$ and $b \in \{\false, \true\}$,
\item[Assignment] $v_i:=b$ where $1 \leq i \leq \nRegs$ and $b \in \{\false, \true\}$.
\end{description}

The \emph{register state} is a vector of $\nRegs$ Booleans.
An edge with a guard $v_i=b$ may be taken only if the $i$-th register contains~$b$; the register state is unchanged.
The register state after the execution of an edge with an assignment $v_i:=b$ is the same as the preceding register state except that the $i$-th register now contains~$b$.

The \emph{reachability problem} for such a system is the existence of a valid execution starting in the initial node with all registers equal to $\false$, and leading to the final node.
\end{definition}

\citet[Lemma~20]{Monniaux:2019:CCA:3368192.3366018} show:
\begin{lemma}
The reachability problem for Boolean register machines is PSPACE-complete.
\end{lemma}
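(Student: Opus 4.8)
The plan is to prove the two directions separately, treating membership as the easy half and hardness as the substantive one. For membership in PSPACE, I would observe that a configuration of a Boolean register machine is a pair consisting of a node of the control graph together with a register state in $\{\false,\true\}^{\nRegs}$. Such a configuration has description size $O(\log|V| + \nRegs)$, where $V$ is the vertex set, hence polynomial in the input. The reachability question is then ordinary reachability in the (exponentially large but implicitly described) configuration graph, whose vertices have polynomial-size descriptions and whose edge relation is testable in polynomial time: given a configuration and an edge label, checking a guard $v_i=b$ and applying an assignment $v_i:=b$ are immediate. A nondeterministic procedure that stores only the current configuration and repeatedly guesses the next edge decides reachability in nondeterministic polynomial space; by Savitch's theorem, $\mathrm{NPSPACE}=\mathrm{PSPACE}$, placing the problem in PSPACE.

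For PSPACE-hardness, I would reduce from the acceptance problem of a Turing machine $M$ working in space $p(n)$ on an input $x$ of length $n$, which is PSPACE-complete. The idea is to build, in polynomial time, a Boolean register machine whose reachability mirrors acceptance of $x$ by $M$. I would store the tape contents in registers, allotting $\lceil \log_2 |\Gamma| \rceil$ registers per tape cell for $p(n)$ cells, giving $O(p(n))$ registers in total. The finite control state of $M$ together with the head position --- of which there are only polynomially many, namely $|Q|\cdot p(n)$ --- I would encode directly as nodes of the control graph. Each step of $M$ is then simulated by a gadget leaving the node for a pair $(q,h)$: guards on the registers encoding cell $h$ branch on the symbol currently under the head, and for each possible symbol a short chain of assignment edges writes the new symbol into those registers before the gadget moves to the node encoding the successor control state and head position prescribed by the transition function. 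For a nondeterministic $M$ the several admissible transitions simply become several outgoing branches, which matches the purely existential nature of reachability.

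Since the initial register state is required to be all $\false$, I would prepend an initialization gadget: a straight-line chain of assignment edges that writes the binary encoding of $x$ into the tape registers (leaving the blank encoding elsewhere) and then enters the node for the initial control state with the head on the first cell. The final node of the register machine is declared reachable from exactly those gadgets that enter an accepting control state of $M$. The whole construction is of size polynomial in $|M|$, $p(n)$ and $n$, and is produced in polynomial time. The step requiring the most care --- and the main obstacle --- is the faithful encoding of the transition function by guarded edges: at each $(q,h)$ node one must enumerate the possible head symbols via conjunctions of bit-guards and ensure that precisely the branch matching the actual register contents is enabled, so that no spurious path can reach the final node. Correctness then follows by a routine induction establishing a bijection between configurations of $M$ and the reachable configurations of the register machine that sit at simulation nodes; I would verify in particular that the initialization gadget correctly turns the all-$\false$ start into the intended initial tape.
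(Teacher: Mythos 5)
Your proof is correct, but note that the paper itself does not prove this lemma at all: it is imported verbatim as Lemma~20 of \citet{Monniaux:2019:CCA:3368192.3366018}, so there is no in-paper argument to compare against. What you give is the standard, self-contained argument: membership by walking the implicitly described configuration graph nondeterministically in polynomial space and invoking Savitch's theorem, and hardness by simulating a $p(n)$-space Turing machine with the tape held in $O(p(n)\log|\Gamma|)$ registers and the pair (control state, head position) folded into the control graph, with guards restricted to the single-bit form $v_i=b$ chained to decode the symbol under the head. Two small points worth tightening: the nondeterministic search should carry a polynomially-sized step counter (or otherwise bound path length by the number of configurations) so that it is a genuine decision procedure rather than a semi-decision procedure, and you should fix the convention that the blank symbol is encoded as all-$\false$ so that the required all-$\false$ initial register state needs no further initialization outside the input cells. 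An alternative route to hardness, more in the spirit of the paper's appendix, is to observe that a Boolean register machine is exactly a succinctly represented graph in the sense of \citet{DBLP:journals/iandc/GalperinW83} and to invoke the result of \citet{DBLP:journals/iandc/PapadimitriouY86} that NLOGSPACE-complete graph problems such as reachability become PSPACE-complete under succinct representation; your direct simulation buys self-containedness, while the succinct-representation route makes the analogy with the EXPTIME-completeness argument for the pushdown case explicit.
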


\citet{DBLP:conf/tacas/GodefroidY13} introduced \emph{Boolean programs} or \emph{extended recursive state machines}, which are essentially Boolean register machines except that:
\begin{enumerate}
\item they allow procedure calls
\item they have local variables
\item they allow arbitrary transition relations with arbitrary guard predicates and arbitrary commands assigning new values to registers using arbitrary Boolean functions of the current values of registers.
\end{enumerate}

They prove the following theorem, for which we give an alternative proof in \autoref{part:exptime_complete}:
\begin{theorem}\label{thm:exptime_complete}
The reachability problem in Boolean programs is EXPTIME-complete.
\end{theorem}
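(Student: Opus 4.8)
The plan is to prove Theorem~\ref{thm:exptime_complete} by establishing the two directions separately: membership in EXPTIME and EXPTIME-hardness. For \textbf{membership}, I would argue that a Boolean program with $\nRegs$ (global) register bits and local variables can be compiled into a pushdown system whose control locations record the current program node together with the global register valuation, and whose stack alphabet records, for each pending call, the return node together with the saved local valuation. Since there are at most $2^{\nRegs}$ global valuations and polynomially many nodes, the number of control locations is exponential in the input size; likewise the stack alphabet is of exponential size. The reachability question then reduces to a single-configuration reachability query on this pushdown system. The classical saturation algorithms of \cite{DBLP:conf/concur/BouajjaniEM97,DBLP:conf/cav/EsparzaHRS00}, invoked in \autoref{th:reachability_pushdown_polynomial}, run in time polynomial in the \emph{explicit} size of the pushdown system; applying them to our exponentially large system yields an overall EXPTIME algorithm. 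The only care needed here is to confirm that the blow-up is genuinely single-exponential and that the arbitrary Boolean guards and commands of items~(1)--(3) in the definition can be evaluated locally when building each transition rule, so that the construction is itself computable in exponential time.

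For \textbf{hardness}, the plan is to reduce from a known EXPTIME-complete problem, most naturally the acceptance problem for a polynomial-space alternating Turing machine (recall $\mathrm{APSPACE}=\mathrm{EXPTIME}$), or equivalently from the membership problem of a linear-bounded alternating Turing machine. The idea is to simulate the alternating machine by a Boolean program in which the global registers hold the tape contents and head position (polynomially many bits, encoding the exponentially many configurations implicitly), and in which \emph{procedure calls} implement the alternation: an existential branch corresponds to a nondeterministic choice of which recursive call to make, while a universal branch corresponds to requiring that \emph{all} successor calls return successfully. The recursion depth, bounded by the number of alternating steps, may be exponential, which is exactly what distinguishes the pushdown (EXPTIME) setting from the flat (PSPACE) setting; the stack realizes the alternation tree without materializing it.

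The main obstacle I anticipate is encoding \emph{universal} branching faithfully with the purely existential, nondeterministic control flow of a Boolean program. A procedure call returns along a single path, so to force both successors of a universal configuration to be accepting one must thread the obligation through the return value: the called procedure should return (reach its final node) only if the subtree it explores is accepting, and the caller must sequence two such calls, one per universal successor, proceeding only if the first returned. Managing this requires carefully using local variables (item~2) to remember which branch is being checked across the call/return boundary, and using the return-node bookkeeping so that control resumes correctly. A secondary subtlety is ensuring the global tape state is restored or correctly propagated between the two universal sub-explorations, which may require either saving the configuration in locals before each call or re-deriving it; I expect this restoration step, rather than the reachability reduction itself, to carry most of the technical weight. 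Once universal branching is correctly simulated, the equivalence ``the alternating machine accepts iff the final node of the Boolean program is reachable'' follows by induction on the structure of the alternation tree, completing the hardness direction and hence the theorem.
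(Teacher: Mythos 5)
Your membership argument is essentially the paper's: expand the Boolean program into an exponentially larger explicit pushdown system (control locations carry the global valuation, stack symbols carry the return point plus saved locals) and invoke the polynomial-time saturation algorithm of \autoref{th:reachability_pushdown_polynomial}. For hardness, however, you take a genuinely different and perfectly viable route. You reduce directly from $\mathrm{APSPACE}=\mathrm{EXPTIME}$: globals hold a polynomial-size configuration of an alternating polynomial-space machine, existential branching is nondeterministic choice, and universal branching is realized by \emph{sequencing} two recursive calls so that the exit node is reached only if both sub-explorations succeed; locals save the configuration across the call/return boundary. This is the classical argument for recursive state machines with Boolean variables, and the obstacle you flag (threading the universal obligation and restoring the configuration) is exactly where the work lies, but it is standard and sound --- note that acceptance of an alternating machine is defined via a well-founded accepting tree, so reachability of the final node correctly captures it even though individual computation paths may be long. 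The paper instead follows the Galperin--Wigderson / Papadimitriou--Yannakakis ``succinct representation'' paradigm: it takes the four-step chain establishing P-hardness of explicit pushdown reachability (polynomial-time Turing machine $\to$ circuit value problem $\to$ monotone CVP $\to$ context-free grammar emptiness $\to$ Boolean program without registers) and lifts each step to succinctly described instances, with the binary indices of wires and nonterminals landing in the registers and local variables. Your approach is shorter and more self-contained; the paper's buys a uniform explanation of \emph{why} the complexity jumps from P to EXPTIME under succinct encoding, parallel to the NP-to-PSPACE jump it exploits elsewhere, and reuses off-the-shelf P-completeness reductions. Both uses of local variables and arbitrary transitions are legitimate here, since the paper only strips those features out afterwards (Lemmas~\ref{rmk:no_complex_transitions} and~\ref{rmk:no_local_variables}).
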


Their result, however, establishes hardness in too generic a class of programs for our purposes: we want neither local variables nor arbitrary transitions. The following two lemmas get rid of them.

\begin{lemma}\label{rmk:no_complex_transitions}
Reachability in a Boolean program using arbitrary transitions reduces, in polynomial time, to reachability in a Boolean program only using constant guards and assignments.
\end{lemma}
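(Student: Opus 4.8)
The plan is to treat every edge carrying an arbitrary guard predicate or an arbitrary command as a small subgraph (a \emph{gadget}) built only from constant guards $v_i=b$ and constant assignments $v_i:=b$, simulating the evaluation of the underlying Boolean functions gate by gate through fresh auxiliary registers. I assume, as is standard for Boolean programs in the sense of \citet{DBLP:conf/tacas/GodefroidY13}, that each arbitrary guard predicate $\phi(v_1,\dots,v_{\nRegs})$ and each component $f_i$ of an arbitrary command $(v_1,\dots,v_{\nRegs}) \mapsto (f_1,\dots,f_{\nRegs})$ is given succinctly by a Boolean circuit or formula; the reduction will be polynomial in the total size of these circuits.

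First I would introduce, for every internal gate of these circuits, one fresh register holding that gate's output, and replace each gate by a constant-operation gadget. For a binary gate computing $w := g(u_1,u_2)$ I splice in four parallel branches out of the gadget's entry node, the branch for the input combination $(b_1,b_2)$ consisting of the guards $u_1=b_1$ and $u_2=b_2$ followed by the single assignment $w := g(b_1,b_2)$, all reuniting at the gadget's exit node; unary gates and input literals are handled the same way with two branches or none. Because the guards of the four branches partition the register states, exactly one branch is enabled in any configuration, so the gadget deterministically writes the correct value of $w$ regardless of the (possibly stale) prior contents of $w$, and it stays faithful inside the nondeterministic multigraph. Chaining these gadgets in topological order evaluates $\phi$, respectively each $f_i$, into its designated output register, every gate being overwritten before it is read so that leftover values from earlier traversals are harmless; since a gadget contains no call edges, no recursive call can intervene during an evaluation.

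An edge guarded by $\phi$ is then replaced by the evaluation gadget for $\phi$ followed by the single constant guard $g_\phi = \true$, where $g_\phi$ holds the output of $\phi$. An edge carrying an arbitrary command is replaced in two phases to respect the simultaneity of the assignments: I first evaluate all the $f_i$ into fresh temporary registers $t_i$ (reading only the original $v_j$, which still hold their pre-transition values), and only then copy each $t_i$ back into $v_i$ by a two-branch copy gadget (guard $t_i=b$, then $v_i:=b$). Edges already carrying a constant guard or assignment, as well as all call and return edges, are left untouched, so the procedure-call structure and the local-variable discipline are preserved and we remain within the class of Boolean programs.

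Each gate, literal, or copy gadget has constant size and uses at most one fresh register, so the transformation adds a number of nodes, edges, and registers linear in the total circuit size and can be produced in polynomial time; a path from the initial to the final node exists in the transformed program if and only if a valid execution reaches the final node in the original one (the final values of auxiliary registers being irrelevant to reachability). The main point to get right is \emph{not} to expand a guard by enumerating its satisfying assignments, which would be exponential: routing the computation through auxiliary registers and evaluating one gate at a time is what keeps the reduction polynomial, while the two-phase temp-then-copy scheme together with the state-partitioning guards is what guarantees correctness under simultaneous assignment and nondeterminism.
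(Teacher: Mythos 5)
Your construction is correct and is essentially the paper's own proof: both introduce a fresh register per subterm (gate output) and encode each operator's truth table as a nondeterministic choice among branches of constant guards on the inputs followed by a constant assignment to the result register, exactly as in Example~\ref{ex:encoding_gates}. Your explicit two-phase evaluate-into-temporaries-then-copy-back step for simultaneous assignments is a welcome detail that the paper handles only implicitly via the primed variables $b'_i$, but it does not change the substance of the argument.
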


\begin{proof}
{\JACMt} remark that arbitrary transitions defined by Boolean functions can be simulated only using ``guard'' and ``assignment'' elementary operations, with only linear blowup.
To each subterm $s$ of the expressions in the transition we associate a register variable:
\begin{itemize}
\item variables $b_i$ or $b'_i$ are retained;
\item for the result $r$ of each operator, a fresh variable is created.
\end{itemize}
Then, the truth table of each operator is encoded: for an operator $r=\textit{op}(x_1,\dots,x_n)$ ($n \leq 2$ for conventional Boolean operators), for each of the $2^n$ possible choices of the inputs, a sequence of guards on $x_1,\dots,x_n$ keeps only that choice, followed by an assignment to $r$ of the correct value; a nondeterministic choice is made between all these inputs.
(Several choices of inputs can be collapsed into the same sequence, if possible.)
\end{proof}

\begin{example}\label{ex:encoding_gates}
  Consider a transition from $(b_1,b_2)$ to $(b'_1,b'_2)$ defined as $(b'_1=b_1) \land (b'_2=b_1 \land \neg b_2)$.
  Create a fresh variable $r$ standing for the result of $\neg b_2$.
  The $b'_1:=b_1$, $r := \neg b_2$ and $b'_2 := b_1 \land r$ operations are encoded into:
  \begin{equation*}
  \begin{tikzpicture}[node distance=1em and 4em, ->,auto]
    \node(qs) {};
    \node(qs0) [above right= of qs] {};
    \node(qs1) [below right= of qs] {};
    \node(qsr) [above right= of qs1] {};
    \path(qs) edge node[above=1ex] {$b_1 = 0$} (qs0);
    \path(qs0) edge node[above=1ex] {$b'_1 :=  0$} (qsr);
    \path(qs) edge node[below=1ex] {$b_1 = 1$} (qs1);
    \path(qs1) edge node[below=1ex] {$b'_1 :=  1$} (qsr);
    
    \node(q0) [above right= of qsr] {};
    \node(qfr) [below right= of q0] {};
    \node(q1) [below right= of qsr] {};
    \path(qsr) edge node[above=1ex] {$b_2 = 0$} (q0);
    \path(q0) edge node[above=1ex] {$r :=  1$} (qfr);
    \path(qsr) edge node[below=1ex] {$b_2 = 1$} (q1);
    \path(q1) edge node[below=1ex] {$r :=  0$} (qfr);

    \node(qa1) [above right= of qfr] {};
    \node(qa2) [right= of qa1] {};
    \node(qf) [below right= of qa2] {};
    \node(qa0) [below right= 1em and 6em of qfr] {};
    \path(qfr) edge node[above=1ex] {$b_1=1$} (qa1);
    \path(qa1) edge node[above] {$r=1$} (qa2);
    \path(qa2) edge node[above=1ex] {$b'_2:=1$} (qf);
    \path(qfr) edge[bend left] node[right=2ex] {$b_1=0$} (qa0);
    \path(qfr) edge node[below left] {$r=0$} (qa0);
    \path(qa0) edge node[below right] {$b'_2:=0$} (qf);
  \end{tikzpicture}
  \end{equation*}
\end{example}

\begin{lemma}\label{rmk:no_local_variables}
Reachability in a Boolean program using local variables and only constant guards and assignments reduces, in polynomial time, to reachability in a Boolean program only using constant guards and assignments but no local variables.
\end{lemma}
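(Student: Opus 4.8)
The plan is to eliminate local variables by \emph{promoting them to global registers} and then \emph{saving and restoring} them across each procedure call, exploiting the fact that the procedure-call mechanism already provides a stack. Concretely, I would introduce one global register for each local variable of each procedure, and at the entry node of every procedure insert a chain of assignments resetting its own promoted locals to~$\false$, so that each activation starts with fresh locals as in the source semantics; the genuine global registers of the source program are kept verbatim. Since the only registers written by the code of a procedure $P$ are $P$'s globalized locals and the genuine globals, the sole way a call $P \to \dots \to Q \to \dots \to P$ can corrupt $P$'s locals is through a (mutually) recursive re-entry into $P$, whose reset-on-entry would clobber them; this is exactly what the save/restore at the call site must compensate, and doing it unconditionally at every call site is harmless (a no-op when there is no re-entry).

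The key step is the save/restore gadget, and the crux is to perform it without exponential blowup. First I would observe that a single bit can be remembered across a call \emph{for free} by encoding it in the return location: an edge $u \xrightarrow{\text{call } Q} w$ that must preserve a register $\ell$ is replaced by a branch that guards on $\ell$, calls $Q$ returning to one of two fresh locations $\rho^{\false},\rho^{\true}$ according to the guard, and from $\rho^{b}$ performs $\ell := b$ before continuing to $w$. The saved value then lives in the return address sitting on the call stack and survives arbitrary clobbering of $\ell$ during $Q$'s execution. All new edges use only constant guards $\ell=b$ and constant assignments $\ell:=b$, so we stay inside the target class.

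The obstacle is that a procedure has $m$ locals, and encoding all of them in a single return address would require $2^m$ distinct return locations — an exponential blowup. I would avoid this by spreading the $m$ bits over $m$ distinct stack frames, one bit per frame: at a call site $c$ of $P$ calling $Q$, I chain $m$ fresh single-entry/single-exit helper procedures $N_1^c,\dots,N_m^c$, where $N_i^c$ guards on the $i$-th local, calls $N_{i+1}^c$ (with $N_{m+1}^c$ taken to be the real target $Q$) returning to one of two locations that restore that bit on the way back out. A LIFO check then shows that when $N_i^c$ reads $\ell_i$ during the descent its value is still the pre-call value $b_i$ (no assignment has occurred yet), and that after $Q$ has run and the $m$ frames unwind in reverse order, each local is restored to $b_i$ while $Q$ itself ran with its own freshly reset locals. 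Each helper has $O(1)$ locations, there are at most $(\text{number of call sites})\times(\text{maximum number of locals})$ of them, and the reset chains add $O(\text{locals})$ edges per procedure, so the whole construction is polynomial.

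Finally I would establish correctness by a step-by-step simulation, maintaining the invariant that whenever target control sits at the image of a node of an activation of $P$, the globalized local registers of $P$ agree with the locals of the matching activation in the source, while the genuine globals agree throughout; reset-on-entry establishes the invariant for fresh activations and the chained gadget re-establishes it at each return site $w_c$. Mapping the source final node to the target final node, this yields that the final node is reachable in the source iff it is reachable in the constructed local-variable-free program, completing the reduction. The main difficulty lies entirely in the third paragraph: the naive one-frame encoding is correct but exponential, whereas a naive ``just save to extra globals'' is polynomial but wrong under recursion, so the work is in making the save/restore simultaneously recursion-safe and polynomial.
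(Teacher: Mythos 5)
Your construction is correct and uses the same essential mechanism as the paper's proof: one saved bit per stack frame, realized by a chain of single-bit wrapper procedures of the form guard $\ell=b$; call; assignment $\ell:=b$, which stashes the saved value in the return location sitting on the stack. The only difference is organizational --- you attach the wrapper chain to each call site to save the \emph{caller's} locals (caller-save, plus an explicit reset-on-entry), whereas the paper wraps each callee $P$ in a chain $P_R,\dots,P_1$ around $P_0=P$ that saves $P$'s \emph{own} locals (callee-save); both variants are polynomial and both are sound.
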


\begin{proof}
  We essentially need a mechanism for saving registers on the stack at function entry and restoring them at function exit.

  A procedure $P$ is turned into $R+1$ procedures $P_0, \dots, P_R$ where $R$ is the number of registers $r_{i_1},\dots,r_{i_R}$ to save.
  $P_0$ is just $P$.
  Calls to $P$ are replaced by calls to $P_R$.
  Each procedure $P_j$ ($1 \leq j \leq R$) consists of two sequences, with nondeterministic choice between them:
  \begin{itemize}
  \item guard $r_{i_j} = 0$; call $P_{j-1}$; assignment $r_{i_j} := 0$;
  \item guard $r_{i_j} = 1$; call $P_{j-1}$; assignment $r_{i_j} := 1$.
  \end{itemize}
\end{proof}

\begin{example}
  Procedure $P$ has local variables $r_5$ and $r_7$.
  We create procedures $P_1$ and $P_2$ as follows:
  \begin{center}
  \begin{tabular}{l|l}
    $P_2$ &
       \begin{tikzpicture}[baseline={([yshift=-.5ex]current bounding box.center)},node distance=1em and 4em, ->,auto]
         \node(qs) { start };
         \node(q0s) [above right=of qs] { };
         \node(q0c) [right=of q0s] { $q_2^0$ };
         \node(q1s) [below right=of qs] { };
         \node(q1c) [right=of q1s] { $q_2^1$ };
         \node(qf) [above right=of q1c] { return };
         \path(qs) edge node [above left] {$r_5 = 0$ } (q0s);
         \path(q0s) edge node [above] {call $P_1$} (q0c);
         \path(q0c) edge node [above right] {$r_5 := 0$ } (qf);
         \path(qs) edge node [below left] {$r_5 = 1$ } (q1s);
         \path(q1s) edge node [above] {call $P_1$} (q1c);
         \path(q1c) edge node [below right] {$r_5 := 1$ } (qf);
       \end{tikzpicture} \\
    \hline
    $P_1$ &
       \begin{tikzpicture}[baseline={([yshift=-.5ex]current bounding box.center)},node distance=1em and 4em, ->,auto]
         \node(qs) { start };
         \node(q0s) [above right=of qs] { };
         \node(q0c) [right=of q0s] { $q_1^0$ };
         \node(q1s) [below right=of qs] { };
         \node(q1c) [right=of q1s] { $q_1^1$ };
         \node(qf) [above right=of q1c] { return };
         \path(qs) edge node [above left] {$r_7 = 0$ } (q0s);
         \path(q0s) edge node [above] {call $P$} (q0c);
         \path(q0c) edge node [above right] {$r_7 := 0$ } (qf);
         \path(qs) edge node [below left] {$r_7 = 1$ } (q1s);
         \path(q1s) edge node [above] {call $P$} (q1c);
         \path(q1c) edge node [below right] {$r_7 := 1$ } (qf);
       \end{tikzpicture} \\
    
  \end{tabular}
\end{center}
Essentially, we use the call stack to store the value of $r_5$ (encoded into a return control location $q_2^0$ or $q_2^1$ depending on its value), then the value of $r_7$ (encoded into a return control location $q_1^0$ or $q_1^1$ depending on its value).
\end{example}

We have gotten rid of the arbitrary transitions and the local variables.
Let us now proceed with the rest of the reductions.
{\JACMt} prove the following:
\begin{itemize} 
\item  The reachability problem on Boolean register machines with acyclic control flow graph is NP-complete.
\item The reachability problem on Boolean register machines is PSPACE-complete.
\end{itemize}
and we will similarly prove that
the reachability problem on Boolean programs where the only transitions are constant guards and constant assignments and without local variables is EXPTIME-complete.

Boolean programs where the only transitions are constant guards and constant assignments and without local variables are, equivalently, Boolean register machines with procedure calls:

\begin{definition}
  A \emph{Boolean register machine with procedure calls} is defined by a number $\nRegs$ of registers, a number $P \geq 1$ of procedures, and $P$ directed (multi)graphs, called \emph{procedures}, with an initial node and a final node, with edges adorned by instructions of the form:
\begin{description}
\item[Guard] $v_i=b$ where $1 \leq i \leq \nRegs$ and $b \in \{\false, \true\}$,
\item[Assignment] $v_i:=b$ where $1 \leq i \leq \nRegs$ and $b \in \{\false, \true\}$,
\item[Call] $\textit{call}(i)$ where $1 \leq i \leq P$.
\end{description}

A \emph{control location} in such a machine is a pair $(i,j)$ where $1 \leq i \leq P$ is the index of a procedure and $j$ is a control vertex inside procedure~$i$.
The \emph{configuration} of a Boolean register machine with procedure calls consists of a control location, the state of the $\nRegs$ registers, and a call stack, a (possibly empty) sequence of control locations.
The execution starts at vertex $1$ of procedure $1$ with an empty stack and zeroes in the registers.
When a procedure $i$ with $N_i$ control locations is called, its execution starts at location $(i,1)$ and stops at location $(i,N_i)$; a location is then popped from the stack and control is transferred to it.

A \emph{reachability problem} in such a machine is whether a given control location is reachable.
\end{definition}

\begin{lemma}
  The reachability problem for Boolean register machines with procedure calls lies in EXPTIME.
\end{lemma}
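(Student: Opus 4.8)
The plan is to encode the Boolean register machine with procedure calls directly as a pushdown system whose control locations absorb the register state, and then to invoke the polynomial-time pushdown reachability algorithm of \autoref{th:reachability_pushdown_polynomial}. The crucial observation is that the registers are \emph{global}: by definition a call neither saves nor restores them, and there are no local variables, so the only unbounded piece of information during an execution is the sequence of pending return locations. The register state therefore belongs in the finite control of the pushdown system, while the call stack need only hold return addresses.

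Concretely, I would build a pushdown system whose control locations are pairs $(\ell,\rho)$, where $\ell$ ranges over the control vertices of all procedures and $\rho$ ranges over the $2^{\nRegs}$ register valuations, and whose stack alphabet is the (polynomially many) control vertices that occur as return targets. A guard edge $v_i=b$ out of $\ell$ to $\ell'$ becomes an $\emptyword$-transition from $(\ell,\rho)$ to $(\ell',\rho)$ present only when $\rho$ satisfies $v_i=b$, leaving the stack untouched; an assignment edge $v_i:=b$ becomes a transition from $(\ell,\rho)$ to $(\ell',\rho[v_i\mapsto b])$, again with no stack change. A call edge $\textit{call}(i)$ from $\ell$ with return vertex $\ell'$ becomes a transition that moves to the entry $((i,1),\rho)$ while pushing $\ell'$; the exit vertex $(i,N_i)$ yields, for each possible top-of-stack symbol $\ell'$, a transition that pops $\ell'$ and moves to $(\ell',\rho)$. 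All transitions may carry a single dummy letter, since we care only about reachability and not about the word read.

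The initial set $I$ is the single configuration $\big((1,1),\rho_0,\emptyword\big)$, with $\rho_0$ the all-$\false$ valuation and empty stack, and the final set $F$ is the regular set of all configurations whose control location is the queried vertex (with arbitrary register valuation and arbitrary stack); both are recognized by small finite automata over $Q\cup\Gamma$. The queried control location is reachable in the register machine if and only if there exist $i\in I$, $f\in F$ and a word $w$ with $\pushtransstar{i}{w}{f}$, which is exactly the problem decided by \autoref{th:reachability_pushdown_polynomial}.

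For the complexity bound, note that the number of control vertices is polynomial in the input but the number of register valuations is $2^{\nRegs}$, so the constructed pushdown system has exponentially many control locations and transitions (its stack alphabet, and the automata for $I$ and $F$, remain polynomial). The system can be built in exponential time, and running on it the algorithm of \autoref{th:reachability_pushdown_polynomial}, which is polynomial in the size of its explicitly represented input, costs time polynomial in an exponential quantity, hence exponential overall; this places the problem in EXPTIME. The only point requiring care is the justification that the register state is correctly kept global across calls, so that nothing beyond return addresses is ever stored on the stack; this is immediate from the definition, since procedures have no local variables and do not save registers.
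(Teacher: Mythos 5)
Your construction is correct, and it is in substance exactly the argument the paper itself uses: the lemma's stated proof is a one-line appeal to \autoref{thm:exptime_complete} (membership for the more general class of Boolean programs), and the paper justifies that membership in \autoref{part:exptime_complete} precisely by your route --- expand the $2^{\nRegs}$ register valuations into the control locations of a pushdown system and run the polynomial-time algorithm of \autoref{th:reachability_pushdown_polynomial} on the exponentially larger instance. The only cosmetic point is that ``leaving the stack untouched'' must be realized, in the paper's formalism, as rules $(q,\gamma)\xhookrightarrow{\sigma}(q',\gamma)$ for every $\gamma$ together with a bottom-of-stack symbol so that transitions can fire from the initial empty-stack configuration; this is a standard detail the paper also elides.
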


\begin{proof}
\autoref{thm:exptime_complete} states membership in EXPTIME for the more general case with local variables and arbitrary transitions.
\end{proof}

\begin{theorem}
  The reachability problem for Boolean register machines with procedure calls is EXPTIME-complete.
\end{theorem}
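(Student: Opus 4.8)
The plan is to establish EXPTIME-completeness by combining the already-proved membership and hardness ingredients. Membership in EXPTIME is immediate from the preceding lemma, so the entire task reduces to proving EXPTIME-hardness. For this I would reduce from the reachability problem for general Boolean programs, which \autoref{thm:exptime_complete} asserts is EXPTIME-complete. The idea is to transform any Boolean program instance into an equivalent Boolean register machine with procedure calls, preserving reachability, in polynomial time.

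The reduction is assembled by chaining the two structural simplification lemmas already established. First I would apply \autoref{rmk:no_complex_transitions} to eliminate arbitrary transitions, rewriting every transition defined by arbitrary guard predicates and arbitrary Boolean-function assignments into sequences of constant guards and constant assignments, at the cost of only linear blowup and the introduction of a few fresh auxiliary registers per transition. Next I would apply \autoref{rmk:no_local_variables} to the resulting program, using the call stack to save and restore register values at procedure entry and exit, thereby removing all local variables while introducing only constant guards and assignments. Each step runs in polynomial time and preserves reachability of the target control location, so their composition is a polynomial-time, reachability-preserving reduction from general Boolean programs to Boolean programs whose only transitions are constant guards and constant assignments and which have no local variables.

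The final observation is the identification already noted in the excerpt: Boolean programs of this restricted form are, by definition, precisely Boolean register machines with procedure calls. Hence the composed reduction maps an arbitrary Boolean-program reachability instance to an equivalent Boolean-register-machine-with-procedure-calls reachability instance, witnessing EXPTIME-hardness. Together with the membership lemma, this yields EXPTIME-completeness.

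I do not expect a genuine obstacle here, since the heavy lifting is done by \autoref{thm:exptime_complete} and the two simplification lemmas. The only point requiring mild care is verifying that the two reductions compose cleanly: \autoref{rmk:no_local_variables} assumes its input already uses only constant guards and assignments, so the order of application matters, and one must check that the register-saving mechanism of \autoref{rmk:no_local_variables} does not reintroduce arbitrary transitions or interfere with the auxiliary registers created by \autoref{rmk:no_complex_transitions}. This is routine—the saving procedures themselves use only constant guards, constant assignments, and calls—but it is the one place where sloppiness could break the argument.
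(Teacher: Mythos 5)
Your proposal matches the paper's proof: membership follows from the preceding lemma, and hardness is obtained by composing the reductions of \autoref{rmk:no_complex_transitions} and \autoref{rmk:no_local_variables} with the EXPTIME-hardness of general Boolean programs from \autoref{thm:exptime_complete}. Your added remark about the order of composition and the fact that the register-saving procedures only use constant guards and assignments is a correct elaboration of the same argument.
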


\begin{proof}
Compose the reductions of \autoref{rmk:no_complex_transitions}, \autoref{rmk:no_local_variables} and \autoref{thm:exptime_complete} to establish hardness.
\end{proof}

\section{EXPTIME-completeness for non-LRU policies}
We show EXPTIME-hardness by reducing arbitrary reachability problems on Boolean register machines with procedure calls to ``exist-hit'' and ``exist-miss'' problems.
We reuse {\JACMt}'s encoding of the Boolean registers into the cache state, and their transformation of Boolean register machines into control-flow graphs adorned with cache blocks, suitable for cache analysis.
This transformation retains the structure of the control-flow, replacing each instruction edge from the Boolean register machine by a ``gadget'' making cache accesses.
To deal with multiple procedures in our problems, we translate each procedure independently and retain call instructions.

\subsection{Encoding for programs without procedures}
{\JACMt} considered programs without procedures.
For each replacement policy, they have
\begin{itemize}
\item a notion of \emph{well-phased} cache state: the initial cache state is well-phased and all gadgets preserve well-phasedness (that is, at their boundary: they use not well-phased states inside the gadget);
\item
  a notion of a \emph{well-formed} cache state \emph{corresponding} to a register state, meaning it encodes that state; only well-formed cache states are meaningful for the transformation of the reachability problems;
\item well-phased but not well-formed cache states may only lead to further well-phased but not well-formed cache states;
  well-phased but not well-formed cache states appear in valid execution traces of the cache analysis problems that are not meaningful for the reduction.
\end{itemize}

Their reductions turn a Boolean register machine into a control flow graph with edges adorned with cache blocks, and thus a reachability problem into a cache analysis problem, as follows:
\begin{itemize}
\item A prologue, set at the entry point of the control flow graph, suitably initializes the cache contents;
\item the main part of the control-flow graph is identical to the Boolean register machine where each instruction (guard or assignment) is replaced by a ``gadget'': a piece of control-flow graph adorned with cache accesses;
  the gadget simulates on the cache state what happens to the Boolean registers;
  if a guard fails, the gadget stops execution or leads to well-phased but not well-formed cache content;
\item an epilogue, leading to the exit point of the control-flow graph, filters out well-phased but not well-formed cache content and prepares the cache so that the exist-miss or exist-hit problem at the control-flow graph exit points exactly answers the reachability problem for the Boolean register machine.
\end{itemize}
The prologue is different if the initial cache is empty or has arbitrary content, and the epilogue is different for exist-miss and exist-hit problems.

Their results can be summarized as: the execution traces of the Boolean register machine reaching the final state of that machine are in a one-to-one correspondence with the execution traces of the control-flow graph that reach the exist-hit (respectively, exist-miss) condition at the end:
the execution sequence of edges in the Boolean register machine maps to a sequence of ``gadgets'' (prologue, then one gadget per edge of the Boolean register machine, then epilogue).
The control-flow graph labeled with cache blocks may have other executions, but they create not well-formed cache content and thus are ignored by the condition in the epilogue.

The encoding, the notions of well-phasedness and well-formedness, and the gadgets used, are completely different for each policy.
We refer readers to {\JACMt} for more details, and shall here just sketch how they encode the reachability problem for a Boolean register machine to the exist-hit problem for the FIFO cache.
The associativity of the cache is chosen as $\nWays = 2\nRegs-1$.
The alphabet of cache blocks is
$\{ (a_{i,b})_{1 \leq i \leq \nRegs,b \in \{\false,\true\}} \} \cup
\{ (e_i)_{1 \leq i \leq \nRegs} \} \cup
\{ (f_i)_{1 \leq i \leq \nRegs-1} \} \cup
\{ (g_i)_{1 \leq i \leq \nRegs-1} \}$.

The main idea is to encode the value of registers by loading the blocks $a_{i,b}$ into the cache ($a_{i,\true}$ is used when the register $i$ contains $1$, and $a_{i,\false}$ is used for~$0$).
The blocks $e_{i}$ are used to distinguished valid Boolean machine executions from executions where the machine should have halted.
Finally, blocks $f_i$ and $g_i$ are used in epilogue to turn valid states into cache hits and invalid states into cache misses.

The register state $v_1,\dots,v_{\nRegs}$ of the register machine is to be encoded as the FIFO state, acting essentially as a delay-line memory:
\begin{equation}
  a_{1,v_1} e_2 a_{2,v_2} \dots e_{\nRegs} a_{\nRegs,v_{\nRegs}}.
\end{equation}
FIFO states that are not of this form are considered not well-formed.

The register machine graph is turned into a cache analysis graph as follows.
\begin{itemize}
\item From the cache analysis initial node $I_f$ to the register machine former initial node $I_r$ there is a prologue, a sequence of accesses $a_{1,\false} e_2 \dots a_{\nRegs-1,\false} e_{\nRegs} a_{\nRegs,\false}$.

\item Each guard edge $v_i = b$ is replaced by the gadget
  \begin{equation}
    \begin{tikzpicture}[node distance=4em,->]
      \node (q0) {start};
      \node (q1) [right of=q0] {};
      \node (qim2) [right of=q1] {};
      \node (qim1) [right of=qim2] {};
      \node (qi) [right of=qim1] {};
      \node (qip1) [right of=qi] {};
      \node (qRm1) [right of=qip1] {};
      \node (qR) [right of=qRm1] {end};
      \path (q0) edge [bend left] node[above] {$\phi_{1,\false}$} (q1);
      \path (q0) edge [bend right] node[below] {$\phi_{1,\true}$} (q1);
      \path (q1) edge [dotted, bend left] (qim2);
      \path (q1) edge [dotted, bend right] (qim2);
      \path (qim2) edge [bend left] node[above] {$\phi_{i-1,\false}$} (qim1);
      \path (qim2) edge [bend right] node[below] {$\phi_{i-1,\true}$} (qim1);
      \path (qim1) edge node[above] {$\phi_{i,b}$} (qi);
      \path (qi) edge [bend left] node[above] {$\phi_{i+1,\false}$} (qip1);
      \path (qi) edge [bend right] node[below] {$\phi_{i+1,\true}$} (qip1);
      \path (qip1) edge [dotted, bend left] (qRm1);
      \path (qip1) edge [dotted, bend right] (qRm1);
      \path (qRm1) edge [bend left] node[above] {$\phi_{\nRegs,\false}$} (qR);
      \path (qRm1) edge [bend right] node[below] {$\phi_{\nRegs,\true}$} (qR);
    \end{tikzpicture}
  \end{equation}
  where $\phi_{i,b}$ denotes the sequence of accesses
  $a_{i,b} e_i a_{i,b}$.

\item Each assignment edge $v_i := b$ is replaced by the gadget
  \begin{equation}
    \begin{tikzpicture}[node distance=4em,->]
      \node (q0) {start};
      \node (q1) [right of=q0] {};
      \node (qim2) [right of=q1] {};
      \node (qim1) [right of=qim2] {};
      \node (qi) [right of=qim1] {};
      \node (qip1) [right of=qi] {};
      \node (qRm1) [right of=qip1] {};
      \node (qR) [right of=qRm1] {end};
      \path (q0) edge [bend left] node[above] {$\phi_{1,\false}$} (q1);
      \path (q0) edge [bend right] node[below] {$\phi_{1,\true}$} (q1);
      \path (q1) edge [dotted, bend left] (qim2);
      \path (q1) edge [dotted, bend right] (qim2);
      \path (qim2) edge [bend left] node[above] {$\phi_{i-1,\false}$} (qim1);
      \path (qim2) edge [bend right] node[below] {$\phi_{i-1,\true}$} (qim1);
      \path (qim1) edge node[above] {$\psi_{i,b}$} (qi);
      \path (qi) edge [bend left] node[above] {$\phi_{i+1,\false}$} (qip1);
      \path (qi) edge [bend right] node[below] {$\phi_{i+1,\true}$} (qip1);
      \path (qip1) edge [dotted, bend left] (qRm1);
      \path (qip1) edge [dotted, bend right] (qRm1);
      \path (qRm1) edge [bend left] node[above] {$\phi_{\nRegs,\false}$} (qR);
      \path (qRm1) edge [bend right] node[below] {$\phi_{\nRegs,\true}$} (qR);
    \end{tikzpicture}
  \end{equation}
  where $\psi_{i,b}$ denotes the sequence of accesses $e_i a_{i,b} e_i$.
\item From the register machine former final node $F_r$ to a node $F_a$ there is a sequence of accesses
  $\psi_{1,\false} \dots \psi_{\nRegs,\false}$,
  constituting the first part of the epilogue.
\item From $F_a$ to a node $F_h$ there is a sequence of accesses 
  \begin{equation*}
  a_{1,\false} g_1 e_2 f_2 a_{2,\false} g_2 \dots e_{\nRegs-1} f_{\nRegs-1} a_{\nRegs-1,\false} g_{\nRegs-1} e_{\nRegs} f_{\nRegs},
  \end{equation*}
  constituting the second part of the epilogue.
\item The final node is $F_f = F_h$.
\end{itemize}

The main difficulty in this reduction is that the Boolean register machines may terminate traces if a guard is not satisfied, whereas the cache problem has no guards and no way to terminate traces.
The workaround is that cache states that do not correspond to traces from the Boolean machine are irremediably marked as not well-formed: they may lead only to more not well-formed states.

The encoding is chosen such that:
\begin{lemma}
  Assume starting in a well-formed FIFO state, corresponding to state $\sigma$, then any path through the gadget encoding an assignment or a guard
  \begin{itemize}
  \item either leads to a well-formed FIFO state, corresponding to the state $\sigma'$ obtained by executing the assignment, or $\sigma'=\sigma$ for a valid guard;
  \item or leads to a not well-formed state.
  \end{itemize}
\end{lemma}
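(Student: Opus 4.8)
The plan is to prove the lemma by \emph{direct simulation}: treat the FIFO buffer as a word over the cache alphabet, ordered from most to least recently inserted, and track how it evolves under each elementary access using only the two FIFO rules (a hit leaves the buffer unchanged; a miss prepends the new block and, the buffer being full at $\nWays=2\nRegs-1$, drops the oldest block). Since a gadget is a fixed sequence of elementary accesses whose only freedom is the guessed bit $c_j\in\{\false,\true\}$ selected in each factor $\phi_{j,c_j}$ (resp.\ $\psi_{j,c_j}$)---except at the position $i$ fixed by the guard or assignment---a path through the gadget is completely determined by the tuple $(c_1,\dots,c_\nRegs)$, and it suffices to compute the resulting buffer for each such tuple.

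First I would analyze the \emph{intended} path, in which every guessed bit agrees with the current register value, $c_j=v_j$. The content of the lemma is that the gadget then acts as one full pass of a delay-line memory: sweeping $j=1,\dots,\nRegs$, each factor $\phi_{j,v_j}$ (resp.\ $\psi_{j,v_j}$) evicts and re-inserts the cell encoding register $j$, so that by the end the normal form $a_{1,v_1}e_2a_{2,v_2}\cdots e_\nRegs a_{\nRegs,v_\nRegs}$ has been regenerated. Here the separator blocks $e_i$ and the choice $\nWays=2\nRegs-1$ are exactly what make the phase line up: because FIFO does not rejuvenate on a hit, the double access in $\phi_{i,b}=a_{i,b}e_ia_{i,b}$ (resp.\ $\psi_{i,b}=e_ia_{i,b}e_i$) is the device that forces the required re-insertion. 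For a guard the re-inserted value equals the stored value, so $\sigma'=\sigma$; for an assignment the $i$-th cell is overwritten by $a_{i,b}$, so $\sigma'$ is $\sigma$ with its $i$-th component set to $b$. I would verify this by an explicit left-to-right computation on the buffer, isolating the action of each factor.

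Next I would dispose of every \emph{deviating} path. There are two ways to deviate: a guessed bit $c_j\neq v_j$, or, for a guard, the fixed bit $b\neq v_i$ (the guard is violated). In both cases the access to $a_{j,c_j}$ is forced to touch a block absent from the buffer built so far, producing a miss that either inserts a value block incompatible with the encoding or evicts a block/separator that the normal form requires. The point is that this destroys well-formedness at that step; I would then invoke the closure property---already part of the design, that a well-phased but not well-formed state can only be followed by not well-formed states---to conclude that no continuation of the gadget can repair it, so the path ends not well-formed, the second alternative of the lemma.

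The main obstacle is precisely this closure step: I must rule out that some later miss \emph{accidentally} reconstructs a valid normal form. The cleanest route is therefore to establish, as a preliminary invariant, a syntactic characterization of the not-well-formed buffers reachable inside the gadgets and to show it is closed under every elementary access $a_{i,b}$ and $e_i$ occurring in $\phi$ and $\psi$; only then would I run the case analysis above. A secondary subtlety, which the invariant must respect, is that well-formedness is asserted only at the gadget \emph{boundaries}: the buffer is allowed to pass through transiently not-well-formed configurations within a gadget---indeed the intended path does so, owing to the $e_i$ re-insertions---so the invariant and the conclusion must be phrased for the state reached after a \emph{complete} gadget, not after each elementary access.
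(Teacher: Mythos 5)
First, a point of reference: this paper does not actually prove the lemma you were given. The whole subsection is announced as a recollection of the construction from the cited JACM paper (``we refer readers to \dots{} for more details, and shall here just sketch\dots''), and the lemma is introduced with ``the encoding is chosen such that''; no proof appears here. So your proposal cannot be matched against an in-paper argument, only judged on whether it would establish the statement. Its overall shape is the right one and is consistent with what the paper describes: an explicit left-to-right simulation of the FIFO queue along the intended path $c_j=v_j$, showing that the separators $e_j$, the double accesses in $\phi_{j,b}=a_{j,b}e_ja_{j,b}$ and $\psi_{j,b}=e_ja_{j,b}e_j$, and the choice $\nWays=2\nRegs-1$ implement one full rotation of the delay line and regenerate the normal form (with the $i$-th cell overwritten for an assignment); plus a separate argument that every deviating path ends not well-formed. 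Your remark that well-formedness can only be asserted at gadget boundaries---the intended path itself passes through non-normal-form configurations---is exactly the well-phased/well-formed distinction the paper attributes to the original construction.

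The genuine gap is the one you flag yourself and then do not fill: the closure invariant. Essentially all of the content of the lemma is in the claim that once a path deviates ($c_j\neq v_j$, or a violated guard $b\neq v_i$), no suffix of the gadget can reconstruct a well-formed state. Your text asserts that the deviating miss ``destroys well-formedness at that step,'' but by your own observation no intermediate state is well-formed anyway, so that sentence proves nothing; and the actual effect of a deviation is quite specific---for instance, taking $\phi_{j,b}$ with $b\neq v_j$ turns the first access into a miss instead of a hit, so the young end of the queue receives $a_{j,b}\,e_j$ instead of $e_j\,a_{j,v_j}$, a transposition of value block and separator, i.e., a phase error. A complete proof must exhibit the syntactic class of ``well-phased but not well-formed'' queues, show it is disjoint from the well-formed ones, and show it is absorbing under every access $a_{j,b}$ and $e_j$ occurring in the remaining factors of the gadget (and, ultimately, under the epilogue). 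That case analysis is where the whole proof lives, and the proposal contains none of it; until the invariant is written down and its absorption checked, the possibility of an accidental repair is not excluded and the second alternative of the lemma remains unproved.
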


\begin{lemma}
  Assume starting in a not well-formed state, then any path through the gadget encoding an assignment or a guard leads to a not well-formed state.
\end{lemma}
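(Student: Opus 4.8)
This lemma is the \emph{negative} companion to the preceding one: the preceding lemma shows that a well-formed input is either correctly updated or drops out of the well-formed class, whereas here I must show that the not-well-formed class is \emph{closed} under every gadget. The plan is to argue the contrapositive. Fix one gadget (guard or assignment) and one path $\pi$ through it, i.e.\ one choice of the nondeterministic branches $\phi_{j,\false}$ versus $\phi_{j,\true}$ at the positions $j \neq i$. Such a path is a fixed sequence of cache accesses, so it induces a \emph{deterministic} map $T_\pi$ on FIFO states, and proving the lemma amounts to showing $T_\pi^{-1}(\text{well-formed}) \subseteq \text{well-formed}$: landing on a well-formed state at the end of the gadget must force the entering state to have been well-formed.

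To establish this I would track the FIFO queue explicitly through the accesses of $\pi$, running the delay-line bookkeeping of the preceding lemma \emph{in reverse}. A well-formed output is the full alternating ring $a_{1,v_1} e_2 a_{2,v_2} \dots e_\nRegs a_{\nRegs,v_\nRegs}$ occupying all $\nWays = 2\nRegs-1$ slots, and I want to show this exact layout can be produced only from an input already carrying the alternating structure. Each block of $\phi_{j,c}$ or $\psi_{i,b}$ either \emph{hits} (leaving the queue unchanged) or \emph{misses} (enqueuing the accessed block and evicting the tail), and the gadget is tuned so that, on a correct delay line, every access that matters is a miss re-inserting precisely the block the layout expects. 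Reconstructing the queue backwards from a well-formed final state should then force, slot by slot, the presence of the expected $a$- and $e$-blocks in the entering state, leaving no room for a stray block.

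The main obstacle is that FIFO is \emph{lossy}: a miss discards the oldest block, so $T_\pi$ is not globally injective and cannot simply be inverted; a pure cardinality argument also fails, since an assignment path is already two-to-one on the matching well-formed inputs while sending every non-matching well-formed input outside the class. The crux is therefore to show that a defect can never be silently erased. I would do this by a case analysis on the \emph{kind} of malformation present on entry --- a wrong block in an $a$-slot, a missing or duplicated block, a stray $e_1$, $f_i$, or $g_i$, or a queue that is not full --- exhibiting in each case a \emph{surviving witness}: either the offending block is never re-accessed and hence persists into a position incompatible with the well-formed layout, or evicting it opens a gap or shifts a neighbour so that a fresh defect appears and propagates to the gadget's end. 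Verifying this dichotomy for each $\phi$- and $\psi$-access, both at the checked position $i$ and away from it, is the bulk of the work.

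The guard gadget is handled in the same way, the only change being that $\phi_{i,b}$ at the checked position acts as the identity on inputs already holding value $b$ and as a defect-creating miss otherwise; and the same scheme --- negative closure via backward reconstruction together with a surviving-witness case analysis --- is what I would transfer to the PLRU and NMRU encodings, each with its own notion of well-formedness.
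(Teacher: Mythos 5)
The paper does not actually prove this lemma: it is stated as part of a recap of the FIFO construction of \citet{Monniaux:2019:CCA:3368192.3366018}, and the reader is explicitly referred there for the details. So there is no in-paper proof to match your argument against; what can be judged is whether your plan would yield a proof, and as written it would not. You have correctly identified the logical shape of the claim (the not-well-formed class is closed under every gadget path, equivalently a well-formed output forces a well-formed input) and the central obstacle (FIFO evictions make the transition map non-injective), but the entire content of the lemma is the case analysis you defer with ``is the bulk of the work.'' Nothing in the proposal actually traces a single $\phi_{j,c}$ or $\psi_{i,b}$ access against a malformed queue, so the lemma is asserted rather than proved.

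Two concrete problems with the plan itself. First, your ``surviving witness'' dichotomy contains a false branch: in a FIFO queue a block that is \emph{never re-accessed} does not persist --- it is pushed toward the tail by every subsequent miss and is eventually evicted. Since each gadget performs on the order of $2\nRegs$ insertions into a queue of size $2\nRegs-1$, an offending block present on entry is typically flushed out entirely by the end of the gadget; the defect must instead be shown to survive through the \emph{positions} and \emph{identities} of the freshly inserted blocks (e.g.\ a spurious hit or miss at some $a_{j,c}$ or $e_j$ perturbs the insertion count and leaves the output rotated or containing $e_1$). Second, you ignore the \emph{well-phased} invariant that the paper introduces precisely to make this closure argument tractable: the original construction only claims that well-phased-but-not-well-formed states lead to well-phased-but-not-well-formed states, with well-phasedness maintained at gadget boundaries. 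Quantifying over arbitrary malformed cache states, as you propose, makes the case analysis far larger than ``a wrong block in an $a$-slot, a missing or duplicated block, a stray $e_1$, $f_i$, or $g_i$, or a queue that is not full,'' and it is not evident the statement even holds without the well-phased restriction. The closing claim that the same scheme transfers to PLRU and NMRU is unsupported --- the paper stresses that the encodings and the notions of well-formedness are completely different for each policy.
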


\begin{corollary}
  Any path from a well-formed FIFO state in $I_r$ to $F_r$ in the FIFO graph
  \begin{itemize}
  \item either corresponds to a valid sequence of assignments and guards from the register machine from $I_r$ to $F_r$, and leads to a well-formed FIFO state corresponding to the final state of that sequence
  \item or corresponds to an invalid sequence of assignments and guards from the register machine (some guards were no satisfied), and leads to a not well-formed FIFO state.
  \end{itemize}

  Conversely, any valid sequence of assignments and guards from the register machine maps from $I_r$ to $F_r$ transforms a well-formed FIFO state into a well-formed FIFO state, corresponding respectively to the initial and final states of that sequence.
\end{corollary}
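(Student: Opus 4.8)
The plan is to derive the corollary directly from the two lemmas immediately preceding it, by a single induction along the path; the only genuine structural ingredient is the observation that lets a control-flow path be decomposed into complete gadget traversals. First I would record this correspondence. Since the main part of the graph is the register-machine graph in which each instruction edge is replaced by a gadget, and the \emph{start} and \emph{end} nodes of the gadgets are glued precisely at the register-machine control vertices, every path in the FIFO graph from $I_r$ to $F_r$ factors uniquely as a concatenation of complete gadget traversals. The sequence of gadgets so traversed is exactly a walk $\pi$ of guard/assignment edges in the register machine from $I_r$ to $F_r$. Conversely, a register-machine walk fixes the gadgets but leaves internal nondeterminism (the $\phi_{i,b}$ branches) inside each one.

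Next I would prove the forward direction by induction on the number of gadgets, exploiting that the \emph{not well-formed} status is absorbing. The base case is the hypothesis: at $I_r$ the FIFO state is well-formed and corresponds to some register state $\sigma_0$. For the inductive step, if the current state is well-formed corresponding to $\sigma$, the first lemma applied to the next gadget gives two cases: the traversal either yields a well-formed state corresponding to $\sigma'$ (the result of the assignment, or $\sigma'=\sigma$ for a \emph{satisfied} guard), or yields a not well-formed state — which occurs exactly when a guard along $\pi$ fails or the internal $\phi_{i,b}$ guesses are dishonest. If the current state is already not well-formed, the second lemma forces the next state to stay not well-formed. Hence once a path leaves the well-formed locus it never returns. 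Consequently the entire path remains well-formed iff every guard of $\pi$ is satisfied, i.e.\ iff $\pi$ is a \emph{valid} register-machine execution; in that case the terminal FIFO state at $F_r$ is well-formed and corresponds to the final register state of $\pi$. Otherwise $\pi$ contains an unsatisfied guard, the path becomes not well-formed at the offending gadget, and by absorption ends not well-formed at $F_r$. This is exactly the dichotomy asserted by the corollary.

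For the converse I would run the same induction in the opposite sense. Given a valid register-machine execution $\pi$ from $I_r$ to $F_r$, I select inside each gadget the well-formed-preserving internal path: the \emph{honest} traversal in which the $\phi_{i,b}$ nondeterministic guesses agree with the current register contents and the selected guard branch is the satisfied one. Induction then shows that each gadget carries a well-formed state corresponding to $\sigma$ to a well-formed state corresponding to the successor register state, so the assembled path stays well-formed throughout and reaches $F_r$ in the well-formed state corresponding to the final register state, as claimed.

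The main obstacle I anticipate is not the induction, which is routine once not-well-formedness is absorbing, but justifying the existence claim in the converse: the first lemma is phrased as a dichotomy over \emph{all} gadget paths and does not by itself assert that the well-formed-preserving branch is reachable. I would close this gap by appealing to the explicit gadget layout — the path through the $\phi_{i,b}$ edges whose guessed bits agree with $\sigma$ and whose guard branch is the satisfied one — which realizes the first alternative of that lemma. Everything else, in particular the unique factorization of a control-flow path into gadget traversals, is bookkeeping about the glued start and end nodes.
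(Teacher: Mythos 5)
Your overall route is the intended one: the paper states this corollary without proof (it is recalled from the earlier JACM work), as an immediate consequence of the two preceding lemmas via exactly the induction you describe --- unique factorization of a FIFO-graph path into complete gadget traversals, the first lemma for the inductive step from a well-formed state, the second lemma making not-well-formedness absorbing, and an explicit ``honest'' traversal of each gadget for the converse. Your observation that the converse needs more than the dichotomy of the first lemma (namely, the existence of a well-formed-preserving branch, read off from the $\phi_{i,b}$ layout) is correct and is the one point where something beyond pure bookkeeping is required.

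There is, however, one step that does not hold as written. Having correctly noted that a gadget traversal goes not well-formed ``exactly when a guard along $\pi$ fails \emph{or} the internal $\phi_{i,b}$ guesses are dishonest,'' you then conclude that ``the entire path remains well-formed iff every guard of $\pi$ is satisfied,'' and hence that a valid $\pi$ ends in a well-formed state. The left-to-right implication of that ``iff'' is false: a path whose underlying gadget sequence is a perfectly valid register-machine execution can still take a dishonest $\phi_{i,b}$ branch (accessing $a_{i,b}$ when register $i$ holds $\lnot b$) and thereby fall irrecoverably into the not-well-formed locus. The dichotomy must be organized by the \emph{outcome}, not by the validity of $\pi$ alone: either the path never leaves the well-formed locus --- in which case every guard \emph{and} every internal guess was honest, $\pi$ is valid, and the final state corresponds to $\pi$'s final register state --- or it leaves it once and, by absorption, ends not well-formed. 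Equivalently, the internal $\phi_{i,b}$ branches must themselves be counted among the ``guards'' whose failure puts the path in the second bullet; with that reading your argument goes through, and the existence half (valid $\pi$ implies \emph{some} well-formed-ending path) is supplied by your honest-traversal converse rather than by the first bullet.
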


The epilogue is chosen so that it recognizes only correct states, whose encodings produce a cache hit, while not well-formed states lead to a cache miss: 
\begin{theorem}
There is an execution of the FIFO cache from $I_f$ to $F_f$ such that $a_{\nRegs,\false}$ is in the final cache state if and only if there is an execution of the Boolean register machine from $I_r$ to $F_r$.
\end{theorem}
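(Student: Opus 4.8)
The plan is to assemble the theorem from three pieces in sequence—prologue, middle, epilogue—reducing the genuinely new work to a single policy-specific invariant about the epilogue. First I would dispose of the prologue. It is a fixed linear sequence of $2\nRegs-1$ pairwise distinct accesses $a_{1,\false}\,e_2\,a_{2,\false}\cdots e_{\nRegs}\,a_{\nRegs,\false}$, so starting from the empty cache and using associativity exactly $\nWays=2\nRegs-1$, it leaves the FIFO cache holding precisely this word: the well-formed state encoding the all-$\false$ register valuation at $I_r$. Because the prologue has no branching, \emph{every} execution reaching $I_r$ arrives in this same well-formed state, which is what the ``only if'' direction will need. (For the arbitrary-initial-cache variant the alternate prologue of \JACMt{} produces the same canonical well-formed state, so nothing below changes.)

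Next I would simply invoke the preceding corollary on the middle of the graph. It states that every path from $I_r$ to $F_r$, started in this well-formed state, either tracks a genuine $I_r\to F_r$ execution of the register machine and ends in the well-formed state encoding that machine's final valuation, or corresponds to a sequence in which some guard was violated and ends in a not well-formed state; and conversely every valid register-machine trace is realized by such a path. This is exactly the bijection on the interior of the construction, and it lets me reason about the prologue–middle–epilogue decomposition of any $I_f\to F_f$ execution without re-opening the gadget analysis.

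The heart of the proof is then a single invariant about the epilogue, from which both directions follow at once: the epilogue sequence—first $\psi_{1,\false}\cdots\psi_{\nRegs,\false}$, then the second-stage word $a_{1,\false}\,g_1\,e_2\,f_2\cdots e_{\nRegs}\,f_{\nRegs}$—leaves $a_{\nRegs,\false}$ in the FIFO cache when applied to a well-formed state, and evicts $a_{\nRegs,\false}$ when applied to any not well-formed state. Granting this, the ``if'' direction is immediate: a register-machine trace reaching $F_r$ yields, via the prologue and the corollary, a well-formed state at $F_r$, and the epilogue turns it into a hit on $a_{\nRegs,\false}$ at $F_f=F_h$. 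For the ``only if'' direction, an $I_f\to F_f$ execution with $a_{\nRegs,\false}$ in the cache cannot have reached $F_r$ in a not well-formed state, since the invariant would force a miss; hence by the corollary its middle segment is a valid register-machine trace from $I_r$ to $F_r$.

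The main obstacle is establishing that epilogue invariant, as it is the only step where the specific FIFO dynamics—ordering by \emph{entrance} age, with no rejuvenation on a hit—must be tracked explicitly, and where the auxiliary blocks $f_i,g_i$ and the choice $\nWays=2\nRegs-1$ earn their keep. The intended mechanism is that the first stage normalizes the state, overwriting every register block to its $\false$ variant and reloading $a_{\nRegs,\false}$, so that in a \emph{well-formed} input the subsequent accesses to the $a_{i,\false}$ and $e_i$ in the second stage are hits that do not disturb the FIFO order, and only the controlled fillers $f_i,g_i$ genuinely enter the cache—too few, and too early, to flush $a_{\nRegs,\false}$; whereas a \emph{not well-formed} input has at least one of those blocks absent or misplaced, producing an extra miss that, together with the fillers, pushes $a_{\nRegs,\false}$ out. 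Verifying this count and ordering case by case for FIFO is the real work; because the encoding and the notions of well-phasedness and well-formedness are entirely policy-specific, it does not transfer from the other policies and must be carried out directly.
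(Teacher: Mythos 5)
Your proposal is correct and follows essentially the same route as the paper: prologue establishes the canonical well-formed state at $I_r$, the preceding corollary handles the $I_r \to F_r$ segment, and everything reduces to the single epilogue invariant (hit on $a_{\nRegs,\false}$ for well-formed states, miss for not well-formed ones), which the paper likewise asserts by design without carrying out the FIFO case analysis, deferring the gadget-level verification to \JACMt. Your counting sketch for why the $2\nRegs-2$ fillers $f_i,g_i$ spare exactly one block when all other second-stage accesses hit, and flush $a_{\nRegs,\false}$ otherwise, matches the intended mechanism.
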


\subsection{Extension to programs with procedure calls}
We extend our control flow graphs labeled with cache blocks with procedure calls, mimicking the procedure calls on the Boolean register machines, and we apply the same reduction: guards and assignments are replaced by gadgets, prologue and epilogue are added;
the difference is that we deal with procedure calls, which are kept intact.

The execution traces of the Boolean register machine with procedure calls reaching the final state of that machine are in a one-to-one correspondence with the execution traces of the control-flow graph with procedure calls, with control edges adorned by cache accesses, that reach the exist-miss (or exist-miss) condition at the end.

This reduction proves EXPTIME-hardness of the exist-miss and exist-hit problems for control-flow graphs with cache block accesses and procedure calls for the same cases as {\JACMt} proves PSPACE-hardness without procedure calls.

Membership in EXPTIME is easy to establish by reduction to Boolean programs, for which reachability properties are known to be in EXPTIME.
Indeed, a cache has an internal state, which can be encoded as a vector of bits (as in hardware): for FIFO, it consists of $\nWays$ block labels (if there are $n$ blocks in the system, each label takes $\lceil \log_1(\nWays-1) \rceil +1$ bits); for PLRU it consists of the block labels and the direction of the arrows in the PLRU tree; for NMRU it consists of the block labels and MRU bits.
For each of these policies, the effect of an access on the cache is implemented by a simple program using comparisons, assignments, etc.: if the block is in the cache, refresh it, if it is not, evict a block from the cache according to the policy and load the block.
This simple program can be expanded (as in the real hardware caches) into logical gates operating on the cache state, taking as input also the binary encoding of the label of the block being accessed.
The number of these logical gates is polynomial in $\nWays$ and in the number $n$ of distinct blocks in the cache analysis problem.
Logical gates can be encoded into Boolean program guards and assignments, as in Example~\ref{ex:encoding_gates}.

Putting all together, we prove:

\begin{theorem}
  The exist-miss and exist-hit problems are EXPTIME-complete for PLRU and FIFO caches for both empty cache and arbitrary cache initialization and control-flow graphs with procedure calls.
  The exist-miss and exist-hit problems are EXPTIME-complete for NMRU caches for empty cache initialization and control-flow graphs with procedure calls.
\end{theorem}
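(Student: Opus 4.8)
The plan is to prove the two directions --- EXPTIME-hardness and membership in EXPTIME --- separately, in each case lifting the established machinery for the flat (procedure-free) setting to the pushdown setting.

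For hardness, I would reduce from the reachability problem for Boolean register machines with procedure calls, which we have just shown to be EXPTIME-complete. The idea is to reuse {\JACMt}'s policy-specific translation of a Boolean register machine into a control-flow graph adorned with cache accesses --- prologue, one gadget per guard or assignment edge, and epilogue --- and to extend it to procedure calls by translating each procedure independently while keeping the call edges intact. The observation that makes this work is that the cache encoding of the register state is \emph{global}: each gadget transforms the cache state encoding the registers exactly as in the flat case, and it does so purely locally, independently of what sits on the call stack. Consequently the notions of well-phasedness and well-formedness, and the invariants that well-formed states map to well-formed states under valid guard/assignment gadgets while ill-formed states can only beget ill-formed states, are preserved verbatim across procedure boundaries. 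The one-to-one correspondence between executions of the register machine reaching its final location and executions of the cache graph reaching the epilogue in a well-formed state therefore carries over, so that the exist-hit (respectively, exist-miss) condition at the graph exit holds iff the register machine's final location is reachable. This yields EXPTIME-hardness for PLRU and FIFO for both initializations, and for NMRU with empty initialization (the NMRU prologue of {\JACMt} only handles the empty cache, which is exactly why NMRU hardness is stated only for that case).

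For membership, I would reduce the exist-hit and exist-miss problems to reachability in Boolean programs, shown to be in EXPTIME by \autoref{thm:exptime_complete}. The cache state is encoded as a bit vector of polynomial size: each of the $\nWays$ slots stores a block label using $O(\log n)$ bits (where $n$ is the number of distinct blocks in the input), plus the policy-specific bits --- a circular index for FIFO, the $\nWays-1$ tree-direction bits for PLRU, the $\nWays$ MRU-bits for NMRU. Since $\nWays$ is given in unary and $n$ explicitly, this encoding is polynomial. Each cache-access edge is replaced by a fragment of polynomially many logical gates simulating the effect of the access on the encoded state (test membership; refresh, or evict-and-load according to the policy), and these gates are turned into Boolean-program guards and assignments exactly as in Example~\ref{ex:encoding_gates}; procedure calls simply map to procedure calls. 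To answer exist-hit at control location $q$ for block $a$, I would add from $q$ a guard edge that succeeds precisely when some slot holds $a$ and leads to a fresh target location; exist-hit then holds iff that target is reachable in the Boolean program, which is in EXPTIME. Exist-miss is analogous with the complementary guard.

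I expect the main obstacle to be the hardness direction, specifically verifying that {\JACMt}'s flat reduction genuinely composes with procedure calls without introducing new spurious behaviors. One must check that no gadget implicitly relies on acyclicity or on the absence of a stack, and that the correspondence is not broken by the fact that the same procedure --- hence the same sub-sequence of gadgets --- may be entered with different cache states along different call paths; but this is exactly the behavior of the source Boolean register machine with procedure calls, so the correspondence remains faithful. A secondary point to make explicit is that each reduction is computable in polynomial time, so that composing the polynomial-time encoding of the cache into a Boolean program with the EXPTIME algorithm of \autoref{thm:exptime_complete} yields membership, while composing the polynomial-time hardness reduction with the EXPTIME-hardness of register-machine-with-calls reachability yields hardness, giving EXPTIME-completeness overall.
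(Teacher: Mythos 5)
Your proposal is correct and follows essentially the same route as the paper: EXPTIME-hardness by reusing {\JACMt}'s gadget-based reduction from Boolean register machines while keeping procedure calls intact (with NMRU restricted to the empty initial cache exactly as you note), and membership by encoding the cache state as a polynomial-size bit vector, compiling each access into logical gates and thence into Boolean-program guards and assignments as in Example~\ref{ex:encoding_gates}, then invoking \autoref{thm:exptime_complete}. Your added remarks on why the well-formedness invariants compose across procedure boundaries and on how the final reachability query is phrased are consistent with, and slightly more explicit than, the paper's own sketch.
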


\section{Conclusion and perspectives}
Our work is yet another indication that LRU caches are easier to analyze statically and thus more suitable for applications where it is important to have static cache analysis---those requiring justifiable bounds on worst-case execution time, and possibly in security and cryptography where one must not leak information through a cache side channel.

One may object that our work, as that of {\JACMt}, establishes the hardness of unrealistic cases, with unbounded associativity and convoluted access patterns.
They already addressed this objection: with bounded associativity, the problems can be polynomially expanded and solved (the exponential is in the associativity), so they cannot be distinguished using the usual complexity classes; and attempting to establish asymptotic differences in polynomial degrees also leads to difficulties.


We have proposed a backtracking algorithm, based on the NP structure, for solving the analysis problems in the case of the LRU policy, as well as an approach for using this algorithm only as a last resort when some approximations have failed to resolve certain cases.

Are there practically efficient algorithms for solving analysis problems for the other policies, for which we proved the analysis to be EXPTIME-complete? Our proof of EXPTIME membership is basically ``expand exponentially the problem into one we know how to solve in polynomial time'', which is obviously explosive. Would there be lazy approaches to this expansion, leading to tolerable execution time and space on practical instances?

\printbibliography

\appendix

\section{Alternative proof for EXPTIME-completeness of reachability in Boolean programs}
\label{part:exptime_complete}
\citet{DBLP:conf/tacas/GodefroidY13} claim EXPTIME-hardness for reachability in Boolean programs (\autoref{thm:exptime_complete}), but they refer the reader to a full version of their article, which is available only by request to the authors. We thus provide, in the next subsections, an independent proof of EXPTIME-hardness for Boolean programs.

Note that EXPTIME membership is easily established. A Boolean register machine with procedure calls may be expanded into an equivalent pushdown system,
at the cost of exponential blowup: just consider one control location in the pushdown automaton for each control location in the Boolean register machine and each of the (exponentially many) vector of values of the registers;
then apply \autoref{th:reachability_pushdown_polynomial}.

\subsection{Succinctly represented problems}
We have seen how a reachability problem involving Boolean registers can be expanded into a reachability problem not involving registers, that is, a reachability problem in an oriented graph at the cost of exponential blowup.
This is an instance of a more general pattern relating the complexity of problems when they are represented as explicit lists of transitions versus ``implicit'' list of transitions, for instance involving registers, in the same way that a small Boolean formula is a succinct representation for a much larger explicit truth table.

\citet{DBLP:journals/iandc/GalperinW83} studied the complexity of various problems on graphs when these graphs are \emph{succinctly} represented, by which they mean that graph vertices are labeled by a vector of bits, and the adjacency relation is defined by a Boolean circuit taking as inputs two vectors of bits and answering one bit: whether the vertices labeled by these two vectors are connected.
\citet{DBLP:journals/iandc/PapadimitriouY86} generalized their results:
a NP-complete problem (respectively, P-complete; NLOGSPACE-complete) problem on explicitly represented graphs, 
under some fairly permissive condition on the reduction used for showing this completeness property,
becomes NEXPTIME-complete (respectively, EXPTIME-complete; PSPACE-complete) on succinctly represented graphs.
A well-known example of this phenomenon is the reachability problem: given two vertices in a directed graph, say whether one is reachable from another---it is NLOGSPACE-complete on explicitly represented graphs, and becomes PSPACE-complete on succinctly represented graphs, where it is also known as the reachability problem in implicit-state model checking.

The reachability problem for explicitly represented pushdown systems, which are very close to Boolean register machines with procedure calls but no registers, is known to be P-complete.
We can thus hope that it becomes EXPTIME-complete for succinctly represented pushdown systems; however we cannot use \citeauthor{DBLP:journals/iandc/PapadimitriouY86}' results because they pertain solely to graph problems.
We can however follow the same general approach as their hardness proof: analyze the reduction from the acceptance problem for polynomial-time Turing machines to the problem for explicitly represented pushdown systems, which are close to Boolean programs without registers, and construct a reduction from the acceptance problem for exponential-time Turing machines to the problem for succinctly represented pushdown systems, which are close to Boolean programs with registers.

It takes four reduction steps to show that the reachability problem for Boolean register machines with procedure calls and $0$ registers is P-hard:
\begin{inparaenum}[(i)]
\item from the acceptance problem for polynomial-time Turing machines to the circuit value problem (CVP) \cite[4.2]{Greenlaw_Hoover_Ruzzo_P_completeness}
\item from the CVP to the monotone circuit value problem \cite[A.1.3]{Greenlaw_Hoover_Ruzzo_P_completeness}
\item from the monotone CVP to the emptiness problem for context-free grammars \cite[A.7.2]{Greenlaw_Hoover_Ruzzo_P_completeness}
\item from emptiness in context-free grammars to reachability in Boolean programs with local variables.
\end{inparaenum}

\subsection{Reductions for explicit descriptions}
The circuit value problem (CVP) is: given a Boolean circuit, using logical gates $\land$, $\lor$, $\neg$, with known inputs, compute its output.
The first reduction step \citep{10.1145/990518.990519} \citep[Th.~4.2.2]{Greenlaw_Hoover_Ruzzo_P_completeness} encodes the bounded deterministic execution of a Turing machine into a circuit in much the same way that one encodes the bounded nondeterministic execution of a Turing machine into a Boolean satisfiability problem: the value $c_{i,j}$ of each cell at each position $j$ in the tape at each point in time $i > 0$ is defined as a function of $c_{i-1,j-1}$, $c_{i-1,j}$ and $c_{i+1,j}$, with a different value whether the read/write head is on the cell; then these values $c_{i,j}$ are encoded into a vector of bits (of size logarithmic in the size of the tape alphabet and the number of control states), and one then obtains a circuit.
It then suffices to add initialization for values $c_{0,j}$ of the cells at time $0$, and a test for a reachability condition.

The monotone CVP is: given a Boolean circuit, using logical gates $\land$ and $\lor$ with known inputs, compute its output.
Obviously it is a subset of the general CVP.
A general CVP can be encoded into a monotone CVP by using ``dual rail encoding''  \citep{10.1145/1008354.1008356} \citep[Th.~6.2.2]{Greenlaw_Hoover_Ruzzo_P_completeness}: each wire $b$ in the original circuit is encoded into two wires $b_0$ and $b_1$, where $b_0$ is $1$ if $b$ is $0$, $0$ if $b$ is $1$, and
$b_1$ is $1$ if $b$ is $1$, $0$ if $b$ is $0$.
It is possible to simulate each $\land$ or $\lor$ gate of the original circuit by two monotone gates; $\neg$ gates map to swapping of two wires.

Let us now encode the monotone CVP into the context-free grammar emptiness problem \citep[A.7.2, crediting Martin Tompa]{Greenlaw_Hoover_Ruzzo_P_completeness}.
To each wire $w_i$ in the circuit one associates a nonterminal $\nu_i$.
If $w_i$ is initialized to $1$, then we add a rule $\nu_i \rightarrow \emptyword$ (meaning that $\nu_i$ accepts the empty word; equivalently one may introduce a nonterminal $a$ and have a rule $\nu_i \rightarrow a$).
We add no rule if $w_i$ is initialized to $0$.
If $w_i$ is defined as $w_j \lor w_k$, then we add two rules $\nu_i \rightarrow \nu_j$ and $\nu_i \rightarrow \nu_k$.
If $w_i$ is defined as $w_j \land w_k$, then we add a rule $\nu_i \rightarrow \nu_j \nu_k$.
The nonterminal $\nu_1$ to test for emptiness is the one that corresponds to the output wire of the monotone circuit.

Finally, let us encode the context-free grammar emptiness problem into the reachability problem for a Boolean program without registers.
This is the well-known relationship between context-free grammars and procedure calls in structured programs.
Each nonterminal in the grammar becomes a procedure.
A derivation rule $L \rightarrow R_1 \dots R_n$ becomes a sequence of calls to th procedures corresponding to nonterminals $R_1$ to $R_n$, starting in the initial control location of the procedure associated with nonterminal $L$ and ending in the final location of that procedure.

\subsection{Lifting reductions to implicitly represented problems}
In the above reductions, circuits are described as a list of gates.
The first reduction step, from Turing machines to CVP, is however highly repetitive: the same construction is applied for all $i > 0$ and $j$.
We thus use the notion of \emph{succinctly described circuit} \cite[ch.~20]{Papadimitriou_complexity}:
wires $w_i$ are identified by their index $i$ written in binary,
and gates in the succinctly represented circuits are introduced by rules of the form
$C(i,j,k): w_i = w_j \land w_k$,
$C(i,j,k): w_i = w_j \lor w_k$,
$C(i,j): w_i = \neg w_j$,
where $C$ is a condition over the binary encodings of indices $i,j,k$, itself expressed as a Boolean circuit, that constrains for which indices the gate is created.
The notion of \emph{succinctly described monotone circuit} is defined similarly.

The encodings described above for turning a reachability problem on the execution of a polynomially bounded Turing machine into an explicitly described CVP of polynomial size, then into an explicitly described monotone CVP of polynomial size, can be applied to turn a reachability problem on the execution of an exponentially bounded Turing machine into a succinctly described CVP of polynomial size, then into a succinctly described monotone CVP of polynomial size.%
\footnote{Succinctly described circuits, and the EXPTIME-completeness of their value problem, have long been known~\cite[Ch.~20]{Papadimitriou_complexity}. We however recall how to establish this result for the sake of completeness and easier understanding of how we turn successive reductions for P-completeness for explicitly described problems into successive reductions for EXPTIME-completeness on succinctly described problems.}

We define similarly the notion of a succinctly represented context-free grammar.
A succinct rule $C(i,j,k): \nu_i \rightarrow \nu_j \nu_k$ (for arity $2$; other arities are similarly defined), where $C$ is a Boolean circuit over the binary encodings of $i$, $j$ and $k$, encodes a family of rules $\nu_i \rightarrow \nu_j \nu_k$ for all $i,j,k$ such that $C(i,j,k)$ returns $1$.
As with explicitly described monotone CVPs, a succinctly described monotone CVP can be transformed into a succinctly represented context-free grammar emptiness problem.

The variables $i$, $j$ etc. are binary encodings. For the final reduction to Boolean register machines with procedures, we put these Boolean encodings into the local variables of the Boolean programs.

\end{document}